\newtheorem{lemma}{Lemma}
\newtheorem{theorem}{Theorem}
\newtheorem{remark}{Remark}
\newcommand{\Ex}{\mathbb{E}}
\newcommand{\R}{\mathcal{R}}
\newcommand{\N}{\mathcal{N}}
\DeclareMathOperator*{\argmax}{\arg\!\max}
\DeclareMathOperator*{\argmin}{\arg\!\min}
\def\endthebibliography{%
	\def\@noitemerr{\@latex@warning{Empty `thebibliography' environment}}%
	\endlist
}
\title{User Association in Dense mmWave Networks as Restless Bandits
}
\begin{document}
	\author{Santosh Kumar Singh, Vivek S. Borkar, and Gaurav S. Kasbekar}
	\maketitle
{\renewcommand{\thefootnote}{} \footnotetext{Copyright (c) 2015 IEEE. Personal use of this material is permitted. However, permission to use this material for any other purposes must be obtained from the IEEE by sending a request to pubs-permissions@ieee.org.
		}}
	
{\renewcommand{\thefootnote}{} \footnotetext{S.K. Singh, V.S. Borkar, and G.S. Kasbekar  are with the Department of Electrical Engineering, Indian Institute of Technology (IIT) Bombay, Mumbai, India. Their email addresses are santoshiitb@ee.iitb.ac.in, borkar@ee.iitb.ac.in, and gskasbekar@ee.iitb.ac.in, respectively. The work of SKS and GSK was supported in part by the project with code RD/0121-MEITY01-001. The work of VSB was supported in part by a S.\ S.\ Bhatnagar Fellowship from the Government of India. 
}}
	\begin{abstract}
We study the problem of user association, i.e., determining which base station (BS) a user should associate with, in a dense millimeter wave (mmWave) network.  	
In our system model, in each time slot, a user arrives with some probability in
a region with a relatively small geographical area served by a dense mmWave network. Our goal is to devise an association policy under which, in each time slot in which a user arrives, it is assigned to exactly one BS so as to
minimize the weighted average amount of time that users
spend in the system. The above problem is a restless multi-armed
bandit problem and is provably hard to solve. We prove that the problem is Whittle indexable, and based on this result, propose an association
policy under which an arriving user is associated with the BS having the smallest Whittle index. Using simulations, we show that our proposed policy outperforms several user association policies proposed in prior work.  	
\end{abstract}
	
\begin{IEEEkeywords}
User Association, Millimeter Wave Networks, Restless Bandits, Whittle Index, Markov Decision Process
\end{IEEEkeywords}
	
\section{Introduction}
Recently, there has been an exponential increase in the volume of data traffic exchanged using wireless networks~\cite{Cisco} due to a proliferation  of data-hungry services with high Quality-of-Service (QoS) requirements. The traditionally used sub-6 GHz cellular bands, which are crowded and expensive, are unable to meet the ever increasing data volume and QoS requirements, despite the use of advanced techniques such as Massive Multiple Input Multiple Output (MIMO) and heterogeneous networking~\cite{Andrews14JSAC,Agiwal16COMST}. In contrast, ample un-utilized spectrum is available~\cite{Cudak13VTC} in millimeter wave  (mmWave) bands and it has the potential to provide multi-gigabit data rates~\cite{Niu15WNET}. Note that the behavior of the channel (medium) at sub-6 GHz and at mmWave frequencies is significantly different, e.g., in terms of attenuation, reflection and diffraction properties~\cite{Andrews14JSAC}. In particular, sub-6 GHz  waves experience lower attenuation with distance, higher reflection from surfaces, higher diffraction from edges and higher penetration loss from blocking objects in the environment than mmWaves~\cite{Niu15WNET,Zhao13ICC}. The above differences in the behavior of the channel (medium) result in differences in the optimal deployment of BSs, sizes and shapes of cells,  association and handover of user with BSs, etc. In particular, in sub-6 GHz networks,  often the cell size is approximately hexagonal or circular and the BS is placed at its center, and handovers typically occur at the boundaries of cells~\cite{Andrews14JSAC,Agiwal16COMST}. Also, in mmWave networks, the deployment of BSs needs to be denser, the nature of communication directional, and handovers do not necessarily occur at the boundaries of cells~\cite{Feng17JSAC,Yang18TCOMM}. The directional nature of communication, dense deployment of BSs, and short transmission range in mmWave networks pose several challenges~\cite{Niu15WNET} such as blockage, rare but heavy interference, frequent handovers, etc.,  and novel strategies are required  to deal with the above challenges~\cite{Sakaguchi17TIEICE}.  
	
The process of user association, i.e., determining which BS a given user should associate with, is crucial in both sub-6 GHz and mmWave wireless networks~\cite{Liu16COMST,Attiah20WNET}. Hence, user association problems have been extensively investigated in prior work, in different network scenarios, with different objective functions and constraints. 
The problem of user association in sub-6 GHz networks has been studied with the objective of maximizing throughput in~\cite{Xu2015JSYST,Feng17TVT,Ma17LWC,Sun18TVT,Dong19TVT,Khalili20LWC}, balancing load in~\cite{Kim11TNET,Ye13TWC,Alizadeh20GLOBECOM}, maximizing fairness in~\cite{Shen14JSAC}, maximizing energy efficiency in~\cite{Fang20TWC,Zarandi20TGCN} and optimizing network utility in~\cite{Bethanabhotla15TWC}. The problem of user association in mmWave networks has been studied with the objective of maximizing throughput in~\cite{Khan2019TCCN,Sana20TWC}, balancing load in~\cite{Zhou15TETC,Alizadeh19TWC,Liu19TCOMM,Khawam20TMC}, maximizing energy efficiency in~\cite{Zhang17JSAC}, maximizing line of sight (LoS) connectivity in~\cite{Soleimani18TCOMM}, optimizing BS deployment in~\cite{Zhang20TWC} and optimizing handovers in~\cite{Sun20TMC}. In prior work, most user association problems were formulated as constrained optimization problems-- in particular, as  combinatorial optimization problems in~\cite{Feng17TVT,Dong19TVT,Khalili20LWC,Zarandi20TGCN,Fang20TWC,Zhang17JSAC,Soleimani18TCOMM,Alizadeh19TWC,Liu19TCOMM,Khawam20TMC}, as non-convex optimization problems in~\cite{Khan2019TCCN,Sana20TWC} and as stochastic optimization problems in~\cite{Sun18TVT,Alizadeh20GLOBECOM,Sun20TMC}. Tools based on the gradient algorithm, Lagrangian method, game theory, machine learning, etc., were used to solve the above problems. 

In~\cite{Alizadeh20GLOBECOM}, the problem of user association in cellular heterogeneous networks (HetNets) was modeled using the multi-armed bandit framework and solved using tools from reinforcement learning. In~\cite{Sun20TMC}, the multi-armed bandit framework was used to model handovers in dense mmWave networks and an online learning algorithm for performing handovers was proposed using the empirical distribution of LoS blockage and post handover trajectories of users. However, none of the above works provided an index based association policy. In~\cite{Sun18TVT}, user association in cellular HetNets was modeled as a \textit{restless multi-armed bandit problem} \footnote{A collection of two or more controlled stochastic processes with two controls, say-- active and passive-- and with discrete state space, is said to be restless bandits, if each process and in each state, upon application of any of the two controls, changes its state, but with different probability law. A reward (cost) is obtained (incurred) upon application of control for each process, and depends on triplet-- current state, action, and next state. Note that classical multi-arm bandits is different from the restless bandits in sense that in the classical multi-arm bandits, the processes for which  passive control is applied does not change its state and provides (incur) zero reward (cost). The goal of restless bandits problems is to maximize (minimize) long run average/ discounted reward (cost) given constraints that in each slot exactly some fixed number of processes to be remain active~\cite{Whittle88restless}.} and it was proved that the modeled problem is PSPACE hard~\cite{Papadimitriou94complexity}. The authors derived the association priority index (which is different from the Whittle index\footnote{A brief description about Whittle Index is provided in Section \ref{Background on Whittle index}.}~\cite{Whittle88restless}) for small cell BSs using the primal-dual index heuristic algorithm after relaxing the hard per stage constraint. However, to the best of our knowledge, \emph{Whittle index~\cite{Whittle88restless} has not been used for solving the association problem in prior work}. This is the space in which we contribute in this paper. We have formulated the association problem in dense mmWave networks as  a \textit{restless multi-arm bandit problem} and provided a Whittle index based user association policy. Note that the Whittle index was introduced in~\cite{Whittle88restless} and has been successfully used for solving problems in a variety of applications~\cite{raghunathan2008index,Argon09PEIS,Liu10TIT,Avrachenkov16TCNS,Borkar17TCNS,Borkar2017AOPR,Hsu18ISIT,Xu19TVT,Tripathi19ALLERTON,Avrachenkov19ALLERTON,Wang19TAC,Sombabu20COMSNETS,Wu20TCNS,Chen21Access}.

In this paper, we consider the user association problem in a dense mmWave network serving a region with a relatively small geographical area such as a seminar hall, bus stop, etc. Time is divided into slots of equal duration and in each slot, a user arrives with some probability. Our goal is to devise an association policy under which, in each time slot in which a user arrives, it is assigned to exactly one BS so as to minimize the weighted average amount of time that users spend in the system. The above problem is a restless multi-armed bandit problem and is provably hard to solve~\cite{Papadimitriou94complexity}. Using an idea of Whittle~\cite{Whittle88restless}, we relax the exact constraint, in which an arriving user needs to be associated with \emph{exactly one BS}, to a time-averaged constraint, in which, an arriving user is associated with one BS \emph{on average}. The use of this relaxation and the standard Lagrange multiplier technique lead us to a set of decoupled controlled Markov chains or Markov decision processes (MDP). We prove the Whittle indexability of each MDP and then establish the indexabilty of the original problem; based on this result, we propose an association policy, in which an arriving user is associated with the BS having the smallest Whittle index.
Our contribution is non-trivial since establishing the Whittle indexability of restless multi-armed bandit problems is intractable in many scenarios. We compare our proposed Whittle index based association policy with the load based, Signal to Noise Ratio (SNR) based, throughput based, mixed, and random association policies~\cite{Gupta21Access} (see Section~\ref{Sec 7} for descriptions of these policies) via detailed simulations and show that our policy outperforms the other policies in all scenarios.

The rest of this paper is organized as follows. Section~\ref{sec_related_work} presents relevant prior work on user association. Section~\ref{section 2} describes the system model and problem formulation. Sections~\ref{Sec 3} and~\ref{Sec 4} establish the threshold nature of the optimal policy. Section~\ref{Sec 5} establishes the Whittle indexability of the considered problem. Section~\ref{Sec 6} describes a scheme for computation of the Whittle index. Section \ref{Section: Application of our results to sub-6 GHz networks} provides application of our results to sub-6 GHz networks. Section~\ref{Sec 7} describes other user association policies used for comparison with our proposed policy and Section~\ref{Sec 8} presents simulation results. Finally, Section~\ref{SC:conclusions} concludes this paper. 

\section{Related Work} \label{sec_related_work}
We review prior works on user association in sub-6 GHz and mmWave networks in Sections~\ref{subA_related work} and~\ref{subB_related work}, respectively. We explain the differences between our work and prior work in Section~\ref{SSC:related:work:differences}.

\subsection{Association in Sub-6 GHz Networks} \label{subA_related work}
		
The alpha-optimal user association policy was proposed in~\cite{Kim11TNET} to adapt to traffic load heterogeneity across BSs in homogeneous networks. This policy becomes optimal in different contexts such as throughput-optimization, delay-optimization,  etc., for different values of alpha. In HetNets, biasing at users for small cell BSs plays a significant role in balancing load and maximizing network throughput. In~\cite{Jo12TWC},  the authors evaluated the effect of biasing on coverage probability and in~\cite{Ye13TWC}, the authors provided a distributed algorithm for finding a load balancing optimal bias in multi-tier HetNets. In~\cite{Shen14JSAC}, a pricing based distributed algorithm was proposed for HetNets with massive MIMO enabled BSs, with the aim of maximizing fairness, considering two situations-- the channel experiences flat fading and frequency selective fading. In~\cite{Bethanabhotla15TWC}, a non-cooperative game based distributed algorithm was proposed for HetNets with BSs with varying numbers of antennas and transmission power capabilities. Each user acts as a player and selfishly chooses a BS based on the utility it gets upon association; BSs allocate their resources based on a local resource allocation rule. In~\cite{Xu2015JSYST}, an optimal centralized algorithm was proposed with the aim of maximizing the network throughput in HetNets, after establishing the unimodularity of the considered problem.

In~\cite{Ma17LWC}, the user cluster-BS association problem was formulated as a combinatorial optimization problem with the aim of maximizing the throughput in massive MIMO enabled HetNets and a low complexity algorithm was proposed to solve it.  User association was formulated as a mixed integer nonlinear programming problem in~\cite{Feng17TVT,Dong19TVT,Khalili20LWC} with the aim of maximizing throughput and in~\cite{Zarandi20TGCN,Fang20TWC}, with the aim of maximizing energy efficiency.  In~\cite{Feng17TVT}, the authors considered two cases-- the problem of finding an interference nulling schedule given user association and the problem of jointly finding an interference nulling schedule and user association. For the first case, an optimal solution based on the cutting plane approach was provided and for the second case,  a distributed scheme was proposed to poly match users with BSs.  In~\cite{Dong19TVT},  a low complexity algorithm based on a three step Gaussian belief propagation (GaBP) distributed solver was proposed. In~\cite{Khalili20LWC}, considering a constraint on cross tier interference, an iterative algorithm was proposed for joint optimization of user association, carrier allocation, antenna selection, and power allocation, using tools from majorization-minimization theory and the augmented Lagrangian method.  In~\cite{Zarandi20TGCN}, the original problem was reformulated as the problem of simultaneous maximization of throughput and minimization of power and an epsilon-method based algorithm was proposed to solve the reformulated problem. In~\cite{Fang20TWC}, the original problem was decomposed into two sub-problems-- association and power allocation-- and an iterative algorithm was provided for the cases with and without co-channel interference.

The user association problem was formulated as a stochastic optimization problem    with the aim of maximizing throughput (respectively, balancing load) in~\cite{Sun18TVT} (respectively, in~\cite{Alizadeh20GLOBECOM}). In~\cite{Sun18TVT}, the problem was shown to be PSPACE hard, and solved in two steps. In the first step, the restless multi-armed bandit framework was used to derive an association priority index (which is different from the Whittle index~\cite{Whittle88restless}) for small cell BSs and in the second step, the proposed algorithm chose the BS with the smallest association priority index from a set of small cell BSs chosen based on the signal to interference plus noise ratio (SINR). In~\cite{Alizadeh20GLOBECOM}, the authors proposed a centralized and a semi-distributed online algorithm, using the multi-armed bandit technique, for performing load balancing and achieving high spectral efficiency. 

However, all the above works address the user association problem in sub-6 GHz networks. In contrast, we address the problem of user association in mmWave networks.

	\subsection{Association in mmWave Networks} \label{subB_related work}

In~\cite{Zhou15TETC}, the user association problem was formulated as a mixed integer linear programming (MILP) problem with the aim of balancing the load in an mmWave network with one macro and several femto cell BSs, by considering joint optimization of association and scheduling. In~\cite{Khan2019TCCN, Sana20TWC}, the user association problem was formulated as a non-convex optimization problem with the aim of maximizing throughput. In~\cite{Khan2019TCCN}, using deep reinforcement learning and the actor critic algorithm, a low complexity algorithm was proposed, which approximates the solution of the original optimization problem. In~\cite{Sana20TWC}, a low complexity, scalable, and flexible algorithm based on multi-agent reinforcement learning was proposed for user association, in which users act as independent agents and adapt their actions based on local information of network states. In~\cite{Sun20TMC}, the user association problem was formulated as  a stochastic optimization problem with the aim of optimizing handovers. The authors proposed two handover mechanisms focusing on spatial and space-time contexts, respectively. The proposed algorithms learn online in a multi-armed bandit framework by exploiting the user's post trajectory distribution. The proposed algorithms do not assume prior knowledge of the user's mobility and environment.
	
The user association problem was formulated as a mixed integer non linear programming (MINLP) problem in~\cite{Zhang17JSAC} with the aim of maximizing energy efficiency, in~\cite{Soleimani18TCOMM} with the aim of maximizing the LoS connectivity and in~\cite{Alizadeh19TWC,Liu19TCOMM,Khawam20TMC} with the aim of balancing the load. In~\cite{Zhang17JSAC}, the formulated problem considered load balancing constraints, a limit on cross tier interference and user QoS requirements; an iterative gradient based algorithm was proposed for user association and power allocation. In~\cite{Soleimani18TCOMM}, the authors used the  difference of two convex programming problems to solve the problem obtained after relaxation of binary variables and proposed  a near-optimal polynomial-time algorithm to assign femto-cell users to femto-cell BSs. In~\cite{Alizadeh19TWC}, a near-optimal polynomial-time worst connection swapping algorithm was proposed and shown to outperform other generic algorithms used to solve combinatorial optimization problems in terms of both accuracy and speed.  In~\cite{Liu19TCOMM}, the authors designed an iterative algorithm for joint user association and power allocation using the Lagrange dual decomposition and Newton-Raphson methods for a single-band access scheme and obtained a near-optimal solution based on the Markov approximation framework for a multi-band access scheme.  In~\cite{Khawam20TMC}, the original problem was reformulated as a non-cooperative game and an efficient distributed solution was provided. 
	
\subsection{Differences Between Our Work and Prior Work}
\label{SSC:related:work:differences}
In most prior works, the user association problem was formulated as a constrained optimization problem, stochastic optimization problem, game theoretic problem, etc., and only a few works, viz.,~\cite{Sun18TVT,Alizadeh20GLOBECOM,Sun20TMC} formulated the user association problem using the multi-armed bandit framework. Out of the latter works, only in~\cite{Sun18TVT}, an index based user association policy was provided. However, the index used in~\cite{Sun18TVT} is different from the Whittle index. To the best of our knowledge, our work is the first to use the Whittle index, which has been successfully used for solving problems in a variety of applications, for solving the user association problem. 

\begin{remark}
Some of the proofs in this paper are similar to those in~\cite{Borkar2017AOPR}. 
However, there are several differences between the model in this paper and that in~\cite{Borkar2017AOPR}. For example, this paper considers the problem of user association in mmWave networks, whereas~\cite{Borkar2017AOPR} considers the  problem of allocating jobs to processors in an egalitarian processor sharing setup. Also, departures from a mmWave base station are assumed to follow a Bernoulli process in this paper, whereas departures from a processor in~\cite{Borkar2017AOPR} are assumed to follow a  Binomial process. Due to the above differences, the analyses in this paper and in~\cite{Borkar2017AOPR} are significantly different. 
\end{remark}

\section{Model, Problem Formulation, and Background} \label{section 2}
\subsection{Model and Problem Formulation}
Consider a wireless network with $K$ mmWave base stations (mBSs) 
serving a small region (e.g., a seminar hall, a bus stop, etc.). Time is divided into slots of equal duration; also, a slot $n \in \{0, 1, 2, \cdots\}$ is considered as the duration of time from time instant $n$ to $n+1$. In each slot, a user arrives (respectively, no user arrives) into the region with probability (w.p.) $p$ (respectively, $1-p$), where $0 < p < 1$. An arriving user gets rate $R_i$ when it gets associated with mBS $i \in \{1, \cdots,K\}$. We assume that all the users associated with mBS $i$ get the same rate $R_i$.  Note that $R_i$ has one of the usual units of rate such as Gbps or Mbps. This assumption models a scenario in which all the users arrive into a relatively small area, due to which, the channel quality and hence rate to mBS $i$ is the same for every user. To facilitate mathematical analysis, from this point onwards, with some abuse of notation, we refer to the normalized rate of each user from a mBS $i$ by $r_i=\frac{R_i}{(\max_{j \in \{1,\cdots,K\}}R_j)+\delta}$, where  $\delta>0$ is a constant. Note that $r_i \in (0,1)$  and is unit-less.  Fig.~\ref{system_model} illustrates the system model.

		\begin{figure}[h!]
		\centering
		\includegraphics[scale=1.2]{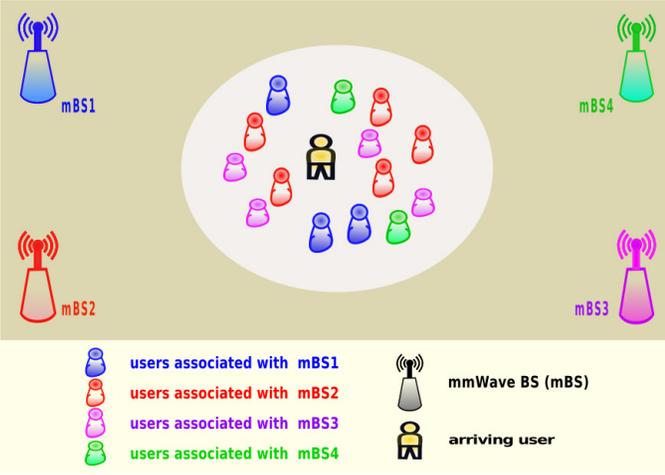}
		\caption{The figure shows an example of the considered wireless network with four mBSs.}
		\label{system_model}
	\end{figure}

Let:
	\begin{align*}
		\zeta_{n+1}=\begin{cases}
			1,&\text{if a user arrives in slot $n$},\\
			0,& \mbox{else}.
		\end{cases}
	\end{align*}
Note that $\zeta_{n+1}$ is a Bernoulli random variable with parameter $p$. Let:
	\begin{align*}
		u_{n}^i=\begin{cases}
			1, &\text{if mBS $i$ admits an arrival in slot $n$,}\\
			0, & \mbox{else.}  
		\end{cases}
	\end{align*}
When an arrival occurs in a slot, it should be assigned to exactly one mBS. Thus, we have the following constraint:
	\begin{align*}
		\sum_{i=1}^{K}u_{n}^i=1 \qquad \forall n.
	\end{align*}
The decision-- on which mBS an arriving user is to be assigned to-- is made taking into account the number of users associated with every mBS. We assume that the mBSs are connected to each other (e.g., they may all be connected to a controller or there may be pair-wise communication links among them); so the above information about numbers of associated users can be exchanged among the BSs.    
	
Let $X^i_n$ denote the number of users associated with mBS $i$ at the beginning of slot $n$. If $X^i_{n} \ge 1$, then a user  (respectively, no user) departs from the queue of mBS $i$ w.p. $r_i$ (respectively, $1-r_i$) in slot $n$. Note that the probability that a user departs (respectively, no user departs) from the queue of mBS $i$ in slot $n$ is independent of $X^i_n$. We use this simple model for the following reasons: The total rate at which the users associated with mBS $i$ in slot $n$ are served is approximately independent of $X^i_n$-- the available bandwidth is shared among the $X^i_n$ users. Also, larger the value of $X^i_n$, more the candidates for departure; on the other hand, smaller the value of $X^i_n$, higher the rate at which each of the users is served.
Let:
	\begin{align*}
		\gamma_{n+1}^i=\begin{cases}
			1, &\text{if a user departs from the queue of mBS $i$ in}\\
			& \mbox{slot $n$,}\\
			0, & \mbox{else}.
		\end{cases}
	\end{align*}
Then $\gamma_{n+1}^i$ is a Bernoulli random variable with parameter $r_i$.

\begin{remark}
In our model, the duration of a time slot is small, and so the probability of two or more users departing from the queue of a mBS in a single time slot is low. Hence, for simplicity, we have assumed that in each slot, at most one user may depart from the queue of a mBS.  
\end{remark}

We are interested in devising a non-anticipating admissible policy, i.e., $\forall n$, given $\{X^i_0; \gamma^i_m,\zeta_m, \,\, m \le n ; u^i_m, m<n \}$, the action  $u_{n}^i$ is to be conditionally independent of $\gamma_m^i, \zeta_m,\,\, 1\le i\le K,\,\, m> n$. Without loss of generality, we assume that $1>r_1\ge \cdots\ge r_K> 0$. Also, to ensure stability of the controlled queues obtained after decoupling of the problem in (\ref{Eq4}) below, we assume that $\frac{p}{1-p} < r_K$. 
	
The state of the queue at mBS $i$ is updated at time instant $n+1$ (i.e., at the end of slot $n$) as:
	\begin{align}
		X^i_{n+1}=(X^i_{n}+\zeta_{n+1} u_{n}^i -\gamma_{n+1}^i)^+, \label{Eq1}
	\end{align}
where $x^+ := \max(x,0)$. A cost $C_i > 0$ per slot per user is incurred at mBS $i$. The cost $C_i$ can be interpreted as the QoS provided by mBS $i$ to users associated with it in terms of the average delay provided by it. The different values of cost at different BSs model heterogeneity across mBSs, e.g., in the sense of different backhaul capacity and/ or different numbers of RF-chain.  The total cost experienced by the mBSs in the system in slot $n$ is given by:
\begin{align*}
	\sum_{i=1}^{K}C_i X^i_n. 
\end{align*}

Our objective is to choose $\{u^i_n\}$, $i \in \{1,\cdots,K\}$, $n \in \{0,1,\cdots\}$, to minimize the long-run expected average cost incurred at the mBSs in the network. Hence, we seek to solve the following problem:
	\begin{align}
		\mbox{minimize}&\limsup_{N\uparrow\infty} \Ex\Bigg[\frac{1}{N} 
		\sum_{n=0}^{N-1} \Bigg(\sum_{i=1}^{K}C_i X^i_n \Bigg) \Bigg] \label{Eq2}\\
		\mbox{s.t.}\qquad  &\sum_{i=1}^{K}u_n^i=1, \qquad \forall n. \nonumber
	\end{align}
Note that the cost in \eqref{Eq2} is the weighted average amount of time that users spend in the system; minimizing this cost ensures that the traffic of users is served fast on average. 

\subsection{Background on Whittle Index} \label{Background on Whittle index}	
The constrained problem in (\ref{Eq2}) is a restless bandit problem with a hard per-stage constraint and obtaining an optimal solution for it is provably hard~\cite{Papadimitriou94complexity}. Whittle in~\cite{Whittle88restless} proposed  that the hard per-stage constraint be relaxed to an average constraint to obtain the relax constraint problem which can provides index based heuristics as a solution satisfying the hard constraint. Note that the optimality of the Whittle index based heuristic, obtained from the optimal solution of the relaxed problem, has been proved in a very few well designed simple cases under suitable assumptions~\cite{Liu10TIT,Maatouk20TWC,Kriouile21ArXiv}. However, in many cases, it might be possible that the optimal solution obtained from the problem with the relaxed constraint may not be a feasible solution for the original problem; even if it is feasible, it may not be an optimal solution for it.  In general, the Whittle index based heuristic is known to be optimal in an asymptotic sense in the infinitely many bandits limit~\cite{weber1990index}.  
The hard per-stage constraint is relaxed to the following average per-stage constraint:
	\begin{align}
		\limsup_{N\uparrow \infty}\frac{1}{N} \sum_{n=0}^{N-1}\sum_{i=1}^{K} \Ex[u_n^i]=1. \label{Eq3}
	\end{align}
The above constraint has the same form as the objective in (\ref{Eq2}), which  paves the way towards a solution obtained by using the standard Lagrange  multiplier formulation. The problem with the relaxed constraint then gets converted into the following unconstrained problem:
	\begin{align}
	\mbox{minimize}	& \limsup_{N\uparrow \infty}\frac{1}{N} \sum_{n=0}^{N-1}\sum_{i=1}^{K} \Ex[\digamma_i(X^i_n,u_n^i)],\label{Eq4}\\
		\mbox{where} \qquad & \digamma_i(x,u)=C_i x+(1-u)\lambda, \nonumber
	\end{align}
and $\lambda$ is a Lagrange multiplier. The masterstroke of Whittle in~\cite{Whittle88restless} was to interpret the Lagrange multiplier $\lambda$ as a  subsidy for a reward-maximization problem. Our problem here is a cost-minimization problem; hence, we choose the above specific form of the cost function and view the Lagrange multiplier as a tax or negative subsidy in the  sense of Whittle~\cite{Whittle88restless}. 
	
Given $\lambda$, the problem in (\ref{Eq4}) gets decoupled into separate controlled chains or MDPs, one controlled chain corresponding to each mBS. To devise a policy based on the Whittle index, we first need to prove that the original problem is Whittle indexable~\cite{Whittle88restless}. If for each decoupled chain and for all sets of parameter values $\{C_i,r_i,p\}$, the set of states for which it is optimal for an mBS to not accept an arrival decreases monotonically from the whole state space to the empty set as the tax $\lambda$ increases from $-\infty$ to $\infty$, then the original problem is said to be Whittle indexable~\cite{Whittle88restless}. The Whittle index of a chain for a state is the value of the tax for which, under the optimal policy, the mBS is indifferent between the two actions-- accepting and not accepting an arrival. In each slot, the Whittle index based policy for the original problem is: the mBS with the smallest Whittle index accepts the arrival. Note that while the Whittle policy is arrived at via a relaxation of the original per-stage constraints, it does satisfy the original constraints.
	
\section{Optimal Policy and Value Function } \label{Sec 3}
This section proves two results (Lemmas~\ref{Lemma1} and~\ref{Lemma3}), which characterize the optimal stationary policy and provide an equation satisfied by the value function. 
	
	\begin{lemma} \label{Lemma1}
		For $X_n^i, 1 \leq i \leq K,$ the function $\psi(x)=e^{\sigma x}$, where $\sigma>0$ is such that $e^{\sigma}<\frac{r_K(1-p)}{p}$, acts as a Lyapunov function  satisfying: under any stationary policy,
		$$\Ex\big[\psi(X^i_{n+1})-\psi(X^i_n)|X^i_n\big]\le -\delta \psi(X^i_n),$$  $\mbox{ for } X^i_n \ne 0$ \mbox{ and }
		for some $\delta>0$.
	\end{lemma}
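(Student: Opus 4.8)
The plan is to compute the one-step conditional drift of $\psi$ directly and then reduce the desired inequality to an elementary bound on a product of two terms. First I would observe that for $X^i_n = x \ge 1$ the positive-part truncation in \eqref{Eq1} is inactive: since $\zeta_{n+1}u^i_n \in \{0,1\}$ and $\gamma^i_{n+1}\in\{0,1\}$, we have $x + \zeta_{n+1}u^i_n - \gamma^i_{n+1} \ge x-1 \ge 0$, so $X^i_{n+1} = x + \zeta_{n+1}u^i_n - \gamma^i_{n+1}$ with no truncation. Hence
\[
\Ex\big[\psi(X^i_{n+1})\,\big|\,X^i_n = x\big] = e^{\sigma x}\,\Ex\big[e^{\sigma(\zeta_{n+1}u^i_n - \gamma^i_{n+1})}\,\big|\,X^i_n=x\big].
\]

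Next, because the policy is non-anticipating, $u^i_n$ is determined by information up to slot $n$, while $\zeta_{n+1}$ and $\gamma^i_{n+1}$ are fresh, mutually independent Bernoulli$(p)$ and Bernoulli$(r_i)$ variables independent of that information. Conditioning on the (possibly randomized) value of $u^i_n$ and factoring, I would bound each factor by its worst case: $\Ex[e^{\sigma\zeta_{n+1}u}] \le pe^{\sigma} + (1-p)$ (attained at $u=1$, since $e^{\sigma}>1$), and $\Ex[e^{-\sigma\gamma^i_{n+1}}] = 1 - r_i(1-e^{-\sigma}) \le 1 - r_K(1-e^{-\sigma})$ (the map $r\mapsto 1-r(1-e^{-\sigma})$ is decreasing and $r_i \ge r_K$). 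Both bounds are uniform over the policy and over $i$, so the drift reduces to showing
\[
g(\sigma) := \big(pe^{\sigma}+(1-p)\big)\big((1-r_K)+r_Ke^{-\sigma}\big) < 1.
\]

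The crux is verifying $g(\sigma)<1$ on the prescribed range. Writing $a = e^{\sigma}$ and clearing the denominator, $a\,(g-1)$ is a quadratic in $a$ with positive leading coefficient $p(1-r_K)$ and constant term $(1-p)r_K$; one checks directly that $a=1$ (i.e.\ $\sigma=0$) is a root, so the other root equals the product of the roots, namely $\frac{r_K(1-p)}{p(1-r_K)}$. The stability assumption $\frac{p}{1-p}<r_K$ ensures $\frac{r_K(1-p)}{p}>1$, so the admissible interval $1 < e^{\sigma} < \frac{r_K(1-p)}{p}$ is nonempty, and since $\frac{r_K(1-p)}{p} < \frac{r_K(1-p)}{p(1-r_K)}$ this interval lies strictly between the two roots, where the upward-opening quadratic — and hence $g-1$ — is negative. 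Therefore $g(\sigma)<1$, and setting $\delta := 1 - g(\sigma) > 0$ yields $\Ex[\psi(X^i_{n+1}) - \psi(X^i_n)\mid X^i_n]\le (g(\sigma)-1)\psi(X^i_n) = -\delta\,\psi(X^i_n)$ for $X^i_n\ne 0$, as claimed. I expect the root-location step to be the only genuine obstacle; identifying that the admissible range of $\sigma$ is exactly the regime where the quadratic sits between its roots is what makes the stated condition $e^{\sigma}<\frac{r_K(1-p)}{p}$ the natural hypothesis, and the remainder is routine bookkeeping.
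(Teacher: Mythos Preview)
Your proof is correct. The route, however, differs from the paper's in one place and ends up doing more work than strictly needed.

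The paper does not factor through independence of $\zeta_{n+1}$ and $\gamma^i_{n+1}$. Instead it splits on the arrival event and, on the arrival branch, simply uses the crude bound $X^i_{n+1}\le X^i_n+1$ (ignoring any simultaneous departure). This yields the \emph{additive} drift coefficient
\[
p\bigl(e^{\sigma}-1\bigr)+(1-p)\,r_K\bigl(e^{-\sigma}-1\bigr),
\]
and negativity of this expression is equivalent, after dividing by $e^{\sigma}-1>0$, to $p<(1-p)r_K e^{-\sigma}$, i.e.\ exactly to the hypothesis $e^{\sigma}<\frac{r_K(1-p)}{p}$. So in the paper the stated condition on $\sigma$ is the whole story; no quadratic analysis is needed.

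Your multiplicative bound $g(\sigma)=\bigl(pe^{\sigma}+(1-p)\bigr)\bigl((1-r_K)+r_Ke^{-\sigma}\bigr)$ is strictly smaller than the paper's additive one (the difference is $p\,r_K(e^{\sigma}-1)>0$), so your $\delta$ is larger and your argument in fact shows the Lyapunov drift holds on the wider range $1<e^{\sigma}<\frac{r_K(1-p)}{p(1-r_K)}$. The price is the extra root-location step. Both approaches are sound; the paper's trades tightness for a one-line finish, while yours gives a sharper constant at the cost of the quadratic bookkeeping you flagged.
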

	\begin{proof} We drop the superscript $i$ for convenience. 
		For $X_n \neq 0$,
\begin{align*}
			&\Ex\big[ \psi(X_{n+1})-\psi(X_n)|X_n \big] \\
			&\le \Bigl\{ p \big( e^{\sigma}-1   \big)+ (1-p) \Ex \Big( e^{-\sigma} \gamma_{n+1}-1 | X_n \Big) \Bigr\} \psi(X_n) \\
			&= \Bigl\{ p\big(e^{\sigma}-1\big)+(1-p)\Pr(\gamma_{n+1} = 1| X_n)\big(e^{-\sigma}-1\big) \Bigr\} \psi(X_n)\\
			&\le \big\{p \big(e^{\sigma}-1\big)+(1-p)r_K\big( e^{-\sigma}-1 \big)\big\} \psi(X_n)
\end{align*}
The result follows. 
	\end{proof}
 
Lemma~\ref{Lemma1} provides a guarantee that the controlled chain eventually hits the state zero with probability $1$ regardless of the chosen stationary control policy and the initial state. Lemma~\ref{Lemma1} is used in establishing the results stated in Lemma~\ref{Lemma3}, which provide insight into the optimal stationary control policy.

Recall that our original problem of minimizing the long-run average cost under the  per-stage hard constraint gets converted into separate control problems of minimizing the long-run average cost for each mBS given the tax $\lambda$. Since the proof of the result that the decoupled problem corresponding to mBS $i$ is Whittle indexable is the same for each mBS $i$, henceforth we drop the index $i$ corresponding to the mBS for simplicity. The dynamic programming equation satisfied by the value function of the individual problem is: 
	\begin{eqnarray}
		V(x) &=& Cx-\rho+\min\Big([(1-p)(1-r)+p r]V(x) \nonumber\\
		& & +(1-p)r V((x-1)^+)+p (1- r )V(x+1); \nonumber \\ 
& & \lambda+(1-r)V(x)+rV((x-1)^+) \Big). \label{Eq8} 
	\end{eqnarray}
The rest of this section sketches the derivation of (\ref{Eq8}). 	

Let $0<\beta<1$. Under the stationary control policy $\pi$, the infinite horizon $\beta$-discounted cost for the controlled process starting in state $x$ is:
	\begin{align*}
		I^{\beta}(x,\pi):=\Ex\Big[\sum_{n=0}^{\infty}\beta^n (C X_n +(1-u_n)\lambda)|X_0=x \Big].
	\end{align*}
The value function for the above infinite horizon $\beta$-discounted problem will be the minimum over all stationary control policies and is given by:
	\begin{align*}
		V^{\beta}(x)=\min_{\pi} I^{\beta}(x,\pi).
	\end{align*}
	
Let $p_{\cdot|\cdot}(u)$ be the transition probability of the controlled chain. Then the value function satisfies the following dynamic programming equation:
	\begin{align*}
		V^{\beta}(x)=\min_{u}\Big[Cx+(1-u)\lambda+\beta \sum_{y} p_{y|x}(u)V^{\beta}(y) \Big].
	\end{align*}
Let $\bar{V}_{\beta}(\cdot)=V^{\beta}(\cdot)-V^{\beta}(0)$. Then, $\bar{V}_{\beta}(\cdot)$ satisfies:
	\begin{align*}
		\bar{V}^{\beta}(x)&=\min_{u}\Big[Cx+(1-u)\lambda-(1-\beta)V^{\beta}(0)\\
		&+\beta \sum_{y} p_{y|x}(u) \bar{V}^{\beta}(y) \Big].
	\end{align*}
	
We now state a lemma, which will be used to prove some results in the following sections. 
	
	\begin{lemma} \label{Lemma3}
		$\lim_{\beta \uparrow 1}\bar{V}^{\beta}=V$ and $\lim_{\beta \uparrow 1}(1-\beta)V^{\beta}(0)=\rho$, where $(V, \rho)$ satisfy (\ref{Eq8}). Furthermore,
		$\rho$ is uniquely characterized as the optimal cost $\rho(\lambda)$ and $V$ is rendered unique on states that are positive recurrent under an optimal policy under the additional condition $V(0) = 0 $. Finally, the argmin of the RHS of (\ref{Eq8}) yields the optimal choice of $u$ for the state $x$. 
	\end{lemma}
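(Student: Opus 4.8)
The plan is to run the standard \emph{vanishing-discount} argument: pass to the limit $\beta \uparrow 1$ in the dynamic programming equation for the relative discounted value function $\bar{V}^{\beta}(\cdot)=V^{\beta}(\cdot)-V^{\beta}(0)$ displayed just above, so as to recover the average-cost optimality equation (\ref{Eq8}), and then invoke a verification argument to identify the limiting constant $\rho$ with the optimal average cost and to read off the optimal action from the minimizer. Three ingredients are needed: (i) a two-sided bound on $\bar{V}^{\beta}(x)$ that is uniform in $\beta$; (ii) extraction of a convergent subsequence; and (iii) justification for passing the limit through the sum and the minimization.

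First I would establish the key a priori estimate, namely that $|\bar{V}^{\beta}(x)| \le a+b\,\psi(x)$ for constants $a,b$ independent of $\beta$, where $\psi(x)=e^{\sigma x}$ is the Lyapunov function of Lemma~\ref{Lemma1}. This is where Lemma~\ref{Lemma1} does the heavy lifting. For the upper bound, starting from $x$ one may play the passive control until the chain reaches state $0$ and thereafter mimic the optimal policy from $0$; the geometric drift $\Ex[\psi(X_{n+1})-\psi(X_n)\mid X_n]\le -\delta\,\psi(X_n)$ guarantees that the expected discounted holding cost accumulated before hitting $0$ is finite and grows at most like $\psi(x)$, which gives $\bar{V}^{\beta}(x)\le a+b\,\psi(x)$. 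The matching lower bound follows from boundedness below of the one-stage cost together with a monotonicity property of $V^{\beta}(\cdot)$ in $x$ (so that $V^{\beta}(0)$ is the least value), which renders $\bar{V}^{\beta}(\cdot)$ bounded below uniformly in $\beta$.

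Given this bound, $\{\bar{V}^{\beta}(x)\}_{\beta}$ lies in a bounded set for each fixed $x$; since the state space $\{0,1,2,\dots\}$ is countable, a diagonal argument extracts a sequence $\beta_k\uparrow 1$ along which $\bar{V}^{\beta_k}(x)\to V(x)$ for every $x$ and $(1-\beta_k)V^{\beta_k}(0)\to\rho$ for some $\rho$. I would then pass to the limit in
\begin{equation*}
\bar{V}^{\beta}(x)=\min_{u}\Big[Cx+(1-u)\lambda-(1-\beta)V^{\beta}(0)+\beta\sum_{y}p_{y|x}(u)\bar{V}^{\beta}(y)\Big].
\end{equation*}
A useful simplification is that, by the dynamics (\ref{Eq1}), every state $x$ has at most the three successors $(x-1)^+, x, x+1$, so the sum over $y$ is \emph{finite} for each $x$; hence no dominated-convergence subtlety arises, and pointwise convergence of $\bar{V}^{\beta_k}$ together with continuity of the minimum over the two admissible actions delivers (\ref{Eq8}) in the limit, with $V(0)=0$ by construction.

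Finally, to identify $\rho=\rho(\lambda)$ and to establish that the argmin yields the optimal action, I would run the usual verification argument on (\ref{Eq8}): iterating the equation along any controlled chain gives $V(X_0)\le\Ex[\sum_{n=0}^{N-1}(CX_n+(1-u_n)\lambda-\rho)]+\Ex[V(X_N)]$, with equality for the stationary policy that selects the minimizing $u$. Dividing by $N$ and using the Lyapunov bound on $V$ to discharge the transversality term $\tfrac1N\Ex[V(X_N)]\to 0$ (which holds because Lemma~\ref{Lemma1} forces positive recurrence, keeping $\Ex[\psi(X_N)]$ bounded) shows that every policy has average cost at least $\rho$, with equality for the minimizing policy; hence $\rho$ is the optimal cost. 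On the recurrent class of that optimal policy, (\ref{Eq8}) reduces to a Poisson equation whose solution is unique up to an additive constant among functions of at most $\psi$-growth, so $V$ is pinned down there once we normalise $V(0)=0$. The \textbf{main obstacle} is the uniform estimate of the second paragraph: converting the one-step geometric drift of Lemma~\ref{Lemma1} into a bound on $\bar{V}^{\beta}$ that holds simultaneously over all stationary policies and uniformly as $\beta\uparrow 1$ is the delicate point on which the entire limiting argument rests.
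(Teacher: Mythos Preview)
Your proposal is correct and follows precisely the route the paper intends: the paper's own proof is omitted and simply says that it ``uses Lemma~\ref{Lemma1} and follows by an argument similar to that used to prove Lemma~4 on p.~7 of~\cite{Borkar2017AOPR}'', which is exactly the vanishing-discount argument you have spelled out---Lyapunov-based uniform bounds on $\bar V^\beta$, diagonal extraction, passage to the limit in the discounted dynamic programming equation (aided here by the finite successor set), and a verification step to pin down $\rho$ and the optimal control. Your write-up is thus a faithful expansion of the argument the paper cites but does not reproduce.
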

	\begin{proof}
		The proof uses Lemma~\ref{Lemma1} and follows by an argument similar to that used to prove Lemma 4 on p. 7 of~\cite{Borkar2017AOPR}. We omit the details for brevity. 
	\end{proof}

\section{Threshold Nature of Optimal Policy} \label{Sec 4}
In this section, we propose a   stationary threshold policy using structural properties of the value function. For this purpose, we relax the state space to $[0,\infty)$ and the control space to $[0,1]$. For this relaxation, the following structural property holds:
\begin{lemma}\label{Lemma Vind}
	$V$ is monotone increasing and  has non-decreasing differences, i.e., if $y > 0$ and $x > x^{\prime}$, then:
	$$V(x + y) - V(x) \ge V(x^{\prime} + y) - V(x^{\prime}).$$
\end{lemma}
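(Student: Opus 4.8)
The plan is to establish both properties for the $\beta$-discounted value function $V^\beta$ by induction along value iteration, and then transfer them to $V$ by letting $\beta\uparrow 1$ through Lemma~\ref{Lemma3}. Because the relaxed control set is $[0,1]$ while the per-stage cost $(1-u)\lambda$ and the admission probabilities are both affine in $u$, the inner minimisation is attained at an endpoint; hence value iteration reads $V_{m+1}(x)=\min\{Q^{a}_{m}(x),Q^{r}_{m}(x)\}$, where $Q^{a}_{m}$ and $Q^{r}_{m}$ are the ``accept'' and ``reject'' bracketed terms in the discounted version of \eqref{Eq8} evaluated at the current iterate $V_m$ (with the $\beta$-discount in front of the continuation terms and $-\rho$ absent). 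Starting from $V_0\equiv 0$, I would carry the induction hypothesis that $V_m$ is nondecreasing and has nondecreasing differences on $[0,\infty)$.

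Monotonicity is the easy half: $x\mapsto Cx$, $x\mapsto(x-1)^{+}$ and $V_m$ are all nondecreasing, so each of $Q^{a}_{m},Q^{r}_{m}$ is nondecreasing, and so is their pointwise minimum. Each action-value function is moreover convex: $V_m(x)$ and $V_m(x+1)$ are convex as shifts of a convex function, and $V_m((x-1)^{+})$ is convex as the composition of the convex nondecreasing map $x\mapsto(x-1)^{+}$ with the convex nondecreasing $V_m$, and nonnegative combinations preserve convexity. The obstruction is that the pointwise minimum of two convex functions need not be convex, so $V_{m+1}=\min\{Q^{a}_{m},Q^{r}_{m}\}$ cannot be handled term by term.

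The structural fact that rescues the differences property is single crossing. A direct computation collapses the difference of the two action-value functions to
\begin{align*}
Q^{a}_{m}(x)-Q^{r}_{m}(x)=-\lambda+\beta p\big[r\,(V_m(x)-V_m((x-1)^{+}))+(1-r)(V_m(x+1)-V_m(x))\big],
\end{align*}
and the bracket is nondecreasing in $x$ by the induction hypothesis; hence $Q^{a}_{m}-Q^{r}_{m}$ changes sign at most once, so the greedy policy at step $m+1$ is of threshold type and the optimal action $u^{*}(\cdot)$ is nonincreasing (accept below a threshold, reject above). I would then verify $V_{m+1}(x+1)+V_{m+1}(x-1)-2V_{m+1}(x)\ge 0$ by bounding $V_{m+1}(x)\le Q^{u}_{m}(x)$ for either action and writing $V_{m+1}(x\pm 1)=Q^{u^{*}(x\pm 1)}_{m}(x\pm 1)$; this reduces the claim to $Q^{a'}_{m}(x+1)-Q^{a'}_{m}(x)\ge Q^{b'}_{m}(x)-Q^{b'}_{m}(x-1)$ with $a'=u^{*}(x+1)\le b'=u^{*}(x-1)$. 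When $a'=b'$ this is just convexity of a single action-value function; the only new case is the \emph{straddling} one, $a'=0,\,b'=1$, where, writing $d_k:=V_m(k+1)-V_m(k)$, the two sides differ by $\beta(1-p)\big[(1-r)(d_x-d_{x-1})+r(d_{x-1}-d_{x-2})\big]\ge 0$, again by the induction hypothesis. The case $a'=1,\,b'=0$ is excluded precisely because $u^{*}$ is nonincreasing, which is exactly why single crossing is indispensable.

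I expect this straddling case to be the main obstacle: it is the one place where the concrete Bernoulli admission/departure structure is actually used, and the crux is the coefficient bookkeeping that turns the residual into a nonnegative combination of second differences of $V_m$. The state $x=0$, where $(x-1)^{+}=0$, would be checked separately but is routine. To finish, value iteration converges pointwise, $V_m\to V^\beta$, and both monotonicity and nondecreasing differences survive pointwise limits; the same then holds for $\bar V^\beta=V^\beta-V^\beta(0)$ for every $\beta$. Lemma~\ref{Lemma3} gives $\bar V^\beta\to V$ as $\beta\uparrow 1$, and passing to this limit transfers both properties to $V$, which, together with the stability guaranteed by Lemma~\ref{Lemma1}, proves the lemma.
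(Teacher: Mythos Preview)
Your argument is essentially correct, but it takes a genuinely different route from the paper's. The paper exploits the continuous relaxation of \emph{both} the state space to $[0,\infty)$ and the action space to $[0,1]$ (fractional admission), and then observes that the one-step functional
\[
f_n^\beta(x,u)=\digamma(x,u)+\beta\int V_{n-1}^\beta\big((x+u\zeta-\gamma)^+\big)\,P_d(d\gamma)P_a(d\zeta)
\]
is \emph{jointly} convex in $(x,u)$, because the transition $x\mapsto x+u\zeta-\gamma$ is affine in $(x,u)$ and $V_{n-1}^\beta$ is convex nondecreasing by the induction hypothesis. Since the infimum of a jointly convex function over a convex set in one variable is convex in the remaining variable, $V_n^\beta(x)=\min_{u\in[0,1]}f_n^\beta(x,u)$ is convex with no further work; the ``min of two convex functions need not be convex'' obstruction you identify simply never arises. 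Two limiting arguments (infinite horizon, then $\beta\uparrow 1$) finish the proof.

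By contrast, you stay with the two-action problem and circumvent the same obstruction structurally, via single crossing of $Q^a_m-Q^r_m$ and the straddling-case bookkeeping. This is perfectly valid and more elementary, at the price of the explicit second-difference computation; the paper's joint-convexity device is slicker but relies on the continuous-$u$ relaxation. One small correction to your opening sentence: under the paper's fractional-admission relaxation the continuation term $V_m((x+u\zeta-\gamma)^+)$ is convex, not affine, in $u$, so the inner minimum over $[0,1]$ is \emph{not} in general attained at an endpoint (the paper itself allows $u^*\in(0,1)$ after the lemma). Your claim would be correct for a randomized-control relaxation; in any case it is irrelevant to your actual argument, which works directly with the two discrete actions.
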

\begin{proof}
	Since convexity of a function implies non-decreasing differences,  to prove that $V$ has non-decreasing differences, it suffices to prove that it is convex. To prove that $V$ is convex, it is sufficient to prove that the value function of the infinite horizon $\beta-$discounted problem is convex for all $\beta$. This is because the pointwise limit of a sequence of convex functions is also convex and we can choose  $\beta_n \uparrow 1$ such that $V^{\beta_n}(\cdot) -  V^{\beta_n}(0) \to  V(\cdot)$. Since the infinite horizon $\beta-$discounted problem is a limiting case of the finite horizon $\beta-$discounted problem, it suffices to prove that the finite horizon $\beta-$discounted value function is convex.
	We will prove the convexity of the value function of the finite horizon $\beta-$discounted problem using an induction argument for the continuous state space of positive reals and continuous action space $[0,1]$ for the control action that allows the admission of a fraction $u$ of the arriving user.   
	
	Let  $P_a(\cdot)$ and $P_d(\cdot)$  be the distributions of the arrival random variable, $\zeta$, and departure random variable, $\gamma$, respectively. Note that both the arrival and departure random variables are state independent Bernoulli processes. Now  consider the dynamic programming equation for the $n-$step finite horizon $\beta-$discounted problem for a continuous state space $[0,\infty)$ and continuous action space $[0,1]$:
	\begin{align}
		V_m^{\beta}(x)&=\min_{u}\big[ Cx \ + \ ( 1 - u)\lambda \ + \nonumber \\
		&\beta \int V_{m-1}^{\beta}(x-\gamma+u \zeta) P_d(d\gamma)P_a(d\zeta)\big], \ \, 0<m\le n,  \label{Eq12}
	\end{align}
	with $V_0^{\beta}(x)=Cx,\, x\ge 0$. Define for $n > 0$:
	\begin{align}
		f_n^{\beta}(x,u)&=\big[ \digamma(x,u) \nonumber \\
		&+\beta \int V_{n-1}^{\beta}(x-\gamma+u \zeta)  P_d(d\gamma)P_a(d\zeta)  \big]. \label{Eq 8}
	\end{align}
	$V_0^{\beta}(x) = C x$ is a convex function. Assume that $V_{n-1}^{\beta}(x)$ is a convex function. Let $u_1$ and $u_2$ be the minimizers of the RHS of (\ref{Eq12})  at points $x_1$ and $x_2$, respectively, where $x_1>x_2$.
	Then we have:
	\begin{align*}
		&V_n^{\beta}(x_i)=f_n^{\beta}(x_i,u_i), \quad i \in \{1,2\}.
	\end{align*}
	Now,
	\begin{align*}
		&\alpha V_n^{\beta}(x_1)+(1-\alpha)V_n^{\beta}(x_2) \\
		&=\alpha \digamma(x_1,u_1)+(1-\alpha)\digamma(x_2,u_2)\\
		&+\beta \int \big[ \alpha V_{n-1}^{\beta}(x_1-\gamma+u_1 \zeta) \ + \\
		& (1-\alpha) V_{n-1}^{\beta}(x_2-\gamma+u_2 \zeta) \big] P_d(d\gamma)P_a(d\zeta)\\
		&\ge f_n^{\beta}(\alpha x_1+(1-\alpha)x_2,\alpha u_1 + (1-\alpha)u_2)\\
		&\ge  V_n^{\beta}(\alpha x_1+(1-\alpha) x_2)
	\end{align*}
	The first inequality holds by convexity of $f_n^\beta$ and the second inequality follows from the definition of $V^\beta_n$. This proves the convexity of $V^\beta_n$, from which the convexity of $V^\beta$ and therefore of $V$ follows by limiting arguments as already described. Monotone increase can also be proved by an analogous induction argument for finite horizon discounted problem followed by the infinite time and vanishing discount limits, in that order. \end{proof}

The dynamic programming equation (\ref{Eq8}) for this continuous state-action space formulation can be rewritten as 
\begin{eqnarray}
	V(x) &=& Cx-\rho+\min_{u\in[0,1]}\Big((1-p)(1-r)V(x) + \nonumber \\
	&& p(1-r)V(x+u) + (1-p)rV((x-1)^+) +\nonumber \\
	&&  prV((x+u-1)^+)+(1-\lambda)u\Big). \label{Eq8b} 
\end{eqnarray}
This involves minimization over $u \in [0, 1]$ of a function of the form $G(x, u) := F(x+ u) - \lambda u$, where $F$ is convex increasing.
Suppose this has a unique minimizer $u^* \in [0, 1]$. (The non-unique case can also be handled  by a suitable modification of what follows.) Since $F$ is convex, its right derivative $F'_+$ and left derivative $F'_-$ are defined except at most countably many points,  are monotone increasing, with $F'_+(x) \geq F'_-(x) \ \forall \ x \in \R^+$. 
Then we must have
$$\lambda \in \partial F(x + u^*) = [F'_-(x + u^*),   F'_+(x + u^*)].$$
Suppose $x + u^* \in [n, n+1]$ for some $n \geq 0$. Then for all  $m < n$, the function $u \mapsto G(y,u)$ is minimized at the $m+1$ (corresponding to $u = 1$). Similarly, for all $m \geq n + 1$, $u \mapsto G(y,u)$ is minimized at the $m$ (corresponding to $u = 0$). Thus  restricted to the original state space $S$, the optimal choice will lead to the next state that is  also in $S$ for every $m \in S$ except possibly for $m = n$, where if $u^* \in (0,1)$, it will take it to a point in $\R^+\backslash S$. If at $n$, the minimization were over $\{0,1\}$, the minimum would have been attained at $u = 0$, corresponding to $n$, if $F(n) \leq F(n+1) - \lambda$, and at $u^* = 1$ otherwise. If we opt for restricting the control to $\{0,1\}$, implying a suboptimal decision for at most one state, viz., $n$, we still get a threshold policy. We now work with this policy to show  as before that it satisfies the condition for Whittle indexability. Most importantly, once the final Whittle indices are derived, the Whittle policy chooses the active bandits accordingly, i.e., by picking a single arm of the bandit as dictated by the order of the indices for the current state profile. Then  the dynamics is very much within the original paradigm of state-action spaces $S$ and $U$.

The threshold nature together with the stability of the optimal policy exactly characterizes the form of the communicating class. In particular, it says that $\{0, \cdots, t+1\}$ will be a communicating class under the threshold policy with threshold $t$. By Lemma~\ref{Lemma1}, state $0$ is eventually reached by the process from all other states. This implies that at most one communicating class, possibly with some transient states, can exist. Since the process under the optimal threshold policy is stable, at least  one communicating class exists. Thus exactly one communicating  class, possibly with some transient states,  exists (unichain property) and it is of the form $\{0, \cdots, t+1\}$, because under the threshold policy with threshold $t$, the set of states $\{0,\cdots,t\}$ (respectively, $\{t+1,\cdots,\infty\}$) is the set for which mBS admits (respectively, does not admit) the arrival. For each $\lambda$, we get the optimal threshold policy for each decoupled process, which implies that we get sets of stationary threshold policies parameterized by $\lambda$.

\section{Whittle Indexability} \label{Sec 5}
In this section, we first prove a sequence of lemmas and then prove the Whittle indexability of the problem.

\begin{lemma} \label{Lemma 6} Let $\mu_t$ be the stationary distribution under the threshold policy with threshold $t$. Then $\sum_{j=0}^{t}\mu_t(j)$ is an increasing function of $t$.
\end{lemma}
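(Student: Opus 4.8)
The plan is to exploit the fact that, under the threshold policy with threshold $t$, the controlled chain restricted to its single communicating class $\{0,1,\dots,t+1\}$ (identified in Section~\ref{Sec 4}) is a birth--death chain, to compute its stationary distribution in closed form by detailed balance, and then to reduce the claimed monotonicity to an elementary inequality about geometric sums.

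First I would write down the one-step transition probabilities. Because the arrival $\zeta$ and departure $\gamma$ are each Bernoulli and the update (\ref{Eq1}) changes the state by $\zeta u-\gamma\in\{-1,0,1\}$, the chain is nearest-neighbour. For the interior admitting states $j\in\{1,\dots,t\}$ the up-probability is $b=p(1-r)$ and the down-probability is $d=(1-p)r$; at the blocking state $t+1$ the arrival is rejected, so the down-probability is instead $r$ (and there is no up-move). The crucial structural observation, used later, is that the stability assumption $\frac{p}{1-p}<r_K\le r$ forces $\kappa:=\frac{p(1-r)}{(1-p)r}<1$.

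Next, solving the detailed-balance equations $\mu_t(j)\,b=\mu_t(j+1)\,d$ on the interior gives $\mu_t(j)=\mu_t(0)\kappa^{j}$ for $0\le j\le t$, while the modified balance across the edge $\{t,t+1\}$ gives $\mu_t(t+1)=(1-p)\kappa^{t+1}\mu_t(0)$, i.e. the top term carries an extra factor $(1-p)$ relative to the pure geometric law. Writing $A_t=\sum_{j=0}^{t}\kappa^{j}$ and normalizing, the quantity of interest becomes
$$\sum_{j=0}^{t}\mu_t(j)=\frac{A_t}{A_t+(1-p)\kappa^{\,t+1}}=\frac{1}{1+(1-p)\,\kappa^{\,t+1}/A_t}.$$
Monotone increase in $t$ of this expression is equivalent to monotone decrease of $\kappa^{\,t+1}/A_t$. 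Rewriting $\kappa^{\,t+1}/A_t=\kappa\big/\sum_{i=0}^{t}\kappa^{-i}$ and noting that each added term $\kappa^{-i}>1$ (since $\kappa<1$) strictly increases the denominator, the ratio strictly decreases, which finishes the proof.

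The only real obstacle is bookkeeping at the boundary state $t+1$: the death probability there is $r$ rather than $(1-p)r$, which is easy to overlook and is exactly what produces the asymmetric $(1-p)$ factor in the top mass. Everything else---convexity, the threshold structure, and the unichain property---is already supplied by Lemmas~\ref{Lemma1}--\ref{Lemma Vind} and Section~\ref{Sec 4}, so no further heavy machinery is needed.
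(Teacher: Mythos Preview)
Your argument is correct. The chain under the threshold-$t$ policy is indeed a birth--death chain on $\{0,\dots,t+1\}$, your detailed-balance computation and the resulting closed form $\mu_t(j)=\mu_t(0)\kappa^j$ for $j\le t$ together with $\mu_t(t+1)=(1-p)\kappa^{t+1}\mu_t(0)$ are right (the extra $(1-p)$ at the top is exactly the point you flag), and the final monotonicity reduction to $\kappa^{t+1}/A_t=\kappa\big/\sum_{i=0}^{t}\kappa^{-i}$ being strictly decreasing in $t$ is clean and valid since the stability assumption $\frac{p}{1-p}<r$ forces $\kappa<1$.

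This is a genuinely different route from the paper's. The paper does not compute $\mu_t$ explicitly; instead it invokes a stochastic-dominance coupling argument between the chains with thresholds $t$ and $t'>t$ (as in Lemma~8 of~\cite{Borkar2017AOPR}), from which the ordering of the tail masses follows without any closed-form formula. Your approach is more elementary and fully self-contained, exploiting the birth--death structure and the explicit geometric stationary law; its cost is that it is specific to this particular transition kernel and would not survive, say, a model where departures depended on the state in a nontrivial way. The stochastic-dominance argument is more robust to such variations but is less transparent. Either is perfectly adequate here.
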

\begin{proof}
	This can be shown using the idea of stochastic dominance of Markov chains. The proof is similar to that of Lemma 8 on p. 11 of~\cite{Borkar2017AOPR} and is omitted for brevity. 
\end{proof}
\begin{lemma} \label{Lemma 7}
	Suppose $g:\R \times \N  \rightarrow \R$ is submodular, i.e., $\forall \,\,\lambda_2<\lambda_1 \mbox{ and }  x_2<x_1 $, 
	\begin{align*}
		g(\lambda_1,x_2)+g(\lambda_2, x_1)\ge g(\lambda_1,x_1)+g(\lambda_2,x_2),
	\end{align*}
	and $x(\lambda):=\inf\{x^{\star}:g(\lambda,x^{\star})
	\le g(\lambda,x) \,\, \forall x \}$. Then $x(\lambda)$ is a non-decreasing function of $\lambda$.
\end{lemma}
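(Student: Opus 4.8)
The plan is to recognize this as an instance of monotone comparative statics (a Topkis-type argmin monotonicity result) and to prove it by a short contradiction argument driven directly by the submodularity inequality. First I would observe that the set $\{x^{\star}:g(\lambda,x^{\star})\le g(\lambda,x)\ \forall x\}$ appearing in the definition is precisely the set of minimizers of $g(\lambda,\cdot)$, so that $x(\lambda)$ is its infimum, i.e.\ the \emph{smallest} minimizer; because the domain $\N$ is discrete this infimum is attained, and this attainment is exactly what will let me extract a strict inequality later.

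Next I would argue by contradiction. Suppose there exist $\lambda_2<\lambda_1$ with $x(\lambda_1)<x(\lambda_2)$, and set $x_2:=x(\lambda_1)$ and $x_1:=x(\lambda_2)$, so that $x_2<x_1$. Since $x_2$ is a minimizer of $g(\lambda_1,\cdot)$, optimality gives $g(\lambda_1,x_2)\le g(\lambda_1,x_1)$. On the other hand, $x_1$ is the \emph{smallest} minimizer of $g(\lambda_2,\cdot)$ and $x_2<x_1$, so $x_2$ fails to be a minimizer for $\lambda_2$; hence the inequality is strict: $g(\lambda_2,x_2)>g(\lambda_2,x_1)$.

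Now I would invoke the submodularity hypothesis for the pair $\lambda_2<\lambda_1$, $x_2<x_1$ and rewrite it by grouping terms by their $\lambda$-argument as
$$g(\lambda_1,x_2)-g(\lambda_1,x_1)\ \ge\ g(\lambda_2,x_2)-g(\lambda_2,x_1).$$
The right-hand side is strictly positive by the previous step, so $g(\lambda_1,x_2)>g(\lambda_1,x_1)$, which contradicts the optimality of $x_2$ for $\lambda_1$. Therefore no such pair $(\lambda_1,\lambda_2)$ exists and $x(\lambda)$ is non-decreasing.

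The step I expect to be the crux---and the only place where genuine care is needed---is securing the \emph{strict} inequality $g(\lambda_2,x_2)>g(\lambda_2,x_1)$. It hinges on $x(\lambda_2)$ being the smallest minimizer rather than merely some minimizer: if ties among minimizers were broken arbitrarily, the argument would collapse into a chain of non-strict inequalities and yield only $g(\lambda_1,x_2)\ge g(\lambda_1,x_1)$, which is consistent with optimality and produces no contradiction. The use of the infimum in the definition of $x(\lambda)$, together with the discreteness of $\N$ (so that the infimum is attained at an honest minimizer below which no minimizer lies), is precisely what rules this degenerate case out.
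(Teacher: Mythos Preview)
Your argument is correct and is the standard Topkis-style monotone comparative statics proof: assume the conclusion fails, extract a strict inequality from the fact that $x(\lambda_2)$ is the \emph{smallest} minimizer (so any strictly smaller point must be strictly suboptimal), and combine this with the submodularity inequality to contradict optimality of $x(\lambda_1)$. Your emphasis on where the strictness comes from is exactly right, and the discreteness of $\N$ guarantees the infimum is attained.

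The paper itself does not supply a proof: it simply refers the reader to the discussion in Section~10.2 of Sundaram's \emph{A First Course in Optimization Theory}. That reference develops precisely the monotone comparative statics result you have reproduced, so your direct proof is in substance the same argument the paper is invoking by citation, only spelled out in full. The benefit of your write-up is that it is self-contained and makes explicit the role of the infimum selection in breaking ties; the paper's approach trades that transparency for brevity by outsourcing to a well-known textbook result.
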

\begin{proof}
	The proof follows from the discussion on p. 258 in Section 10.2 of~\cite{sundaram1996first}.
\end{proof}
\begin{lemma} \label{Lemma 8}
	Denote the stationary average cost under tax $\lambda$ and the threshold policy with threshold $t$ as $$g(\lambda, t)=C\sum_{j=0}^{\infty}j \mu_t(j)+\lambda\sum_{j=t+1}^{\infty}\mu_t(j).$$ Then $g$ is submodular.
\end{lemma}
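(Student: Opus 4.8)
The plan is to exploit the fact that the tax $\lambda$ enters $g(\lambda,t)$ only linearly, and only through the stationary passive-probability term, so that the whole statement collapses onto a single monotonicity fact that Lemma~\ref{Lemma 6} already supplies. Concretely, I would write the affine-in-$\lambda$ decomposition
\begin{align*}
g(\lambda,t) &= A(t) + \lambda B(t), \quad \text{where} \\
A(t) &:= C\sum_{j=0}^{\infty} j\,\mu_t(j), \qquad B(t) := \sum_{j=t+1}^{\infty}\mu_t(j).
\end{align*}
Here $A(t)$ is the stationary holding cost under the threshold policy with threshold $t$ and carries all the intricate $t$-dependence, but it is free of $\lambda$; while $B(t)$ is simply the stationary fraction of time the mBS is passive (i.e.\ does not admit) under threshold $t$.

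To verify the submodularity inequality of Lemma~\ref{Lemma 7}, with $t$ playing the role of the integer variable, I would fix $\lambda_1 > \lambda_2$ and $t_1 > t_2$ and compute the cross-difference. Substituting the decomposition, every occurrence of $A(\cdot)$ appears once with a plus sign and once with a minus sign and therefore cancels, leaving
\begin{align*}
&\big[g(\lambda_1,t_2)+g(\lambda_2,t_1)\big]-\big[g(\lambda_1,t_1)+g(\lambda_2,t_2)\big] \\
&\quad = \lambda_1 B(t_2) + \lambda_2 B(t_1) - \lambda_1 B(t_1) - \lambda_2 B(t_2) \\
&\quad = (\lambda_1 - \lambda_2)\,\big(B(t_2) - B(t_1)\big).
\end{align*}
Since $\lambda_1 - \lambda_2 > 0$, the required inequality $g(\lambda_1,t_2)+g(\lambda_2,t_1)\ge g(\lambda_1,t_1)+g(\lambda_2,t_2)$ is thus \emph{equivalent} to $B(t_2) \ge B(t_1)$, i.e.\ to $B(\cdot)$ being non-increasing in $t$.

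It then remains only to establish this monotonicity, and here I would invoke Lemma~\ref{Lemma 6} directly. Because the total stationary mass equals one, $B(t) = 1 - \sum_{j=0}^{t}\mu_t(j)$, and Lemma~\ref{Lemma 6} asserts precisely that $\sum_{j=0}^{t}\mu_t(j)$ is increasing in $t$; hence $B(t)$ is non-increasing, which closes the argument.

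The substantive work is already done inside Lemma~\ref{Lemma 6} (the stochastic-dominance comparison of the threshold chains), so there is no real obstacle remaining at this stage. The only genuine step is recognizing the clean affine-in-$\lambda$ decomposition, which makes the $\lambda$-independent holding cost $A(t)$ cancel in the cross-difference and reduces submodularity to that single monotonicity statement; everything else is bookkeeping.
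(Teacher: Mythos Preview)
Your proposal is correct and follows essentially the same route as the paper: both observe that the $\lambda$-independent holding-cost term cancels in the submodularity cross-difference, reducing the inequality to the monotonicity of $\sum_{j=0}^{t}\mu_t(j)$ in $t$, which is exactly Lemma~\ref{Lemma 6}. Your explicit naming of the affine decomposition $g(\lambda,t)=A(t)+\lambda B(t)$ makes the cancellation more transparent, but the argument is otherwise identical.
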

\begin{proof}
	To prove that $g$ is submodular, we need to prove that:
	\begin{align*}
		&g(\lambda_1,t_2)+g(\lambda_2, t_1)\ge g(\lambda_1,t_1)+g(\lambda_2,t_2), \\
		& \qquad \qquad \qquad \qquad \qquad \forall \ \lambda_2<\lambda_1
		\mbox{ and } t_2<t_1.
	\end{align*}
	It is easy to see that the above inequality reduces to:
	\begin{align*}
		&\lambda_1 \sum_{i=0}^{t_2} \mu_{t_2}(i)+\lambda_2 \sum_{i=0}^{t_1} \mu_{t_1}(i)\le\lambda_1 \sum_{i=0}^{t_1} \mu_{t_1}(i)+\lambda_2 \sum_{i=0}^{t_2} \mu_{t_2}(i), \\
		& \qquad \qquad \qquad \qquad \qquad\qquad \qquad \quad  \forall \ \lambda_2<\lambda_1
		\mbox{ and } t_2<t_1.\\
		&\Longleftrightarrow  \sum_{i=0}^{t_2} \mu_{t_2}(i)\le \sum_{i=0}^{t_1} \mu_{t_1}(i), \,\,\qquad \forall \ t_2<t_1.
	\end{align*}
	By Lemma~\ref{Lemma 6}, the above inequality holds. Hence $g$ is submodular.
\end{proof}
At this point, we have all the ingredients needed to establish Whittle indexability. 

\begin{theorem}
	This problem is Whittle indexable.
\end{theorem}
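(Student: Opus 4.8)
The plan is to recast Whittle indexability as a monotonicity property of the optimal threshold and then obtain it from the submodularity established in Lemma~\ref{Lemma 8}. By the definition recalled in Section~\ref{Background on Whittle index}, the problem is Whittle indexable exactly when, for every admissible triple $\{C,r,p\}$, the set of states at which it is optimal to refuse an arrival shrinks monotonically from all of $S$ to the empty set as $\lambda$ increases from $-\infty$ to $\infty$. Section~\ref{Sec 4} has already reduced the optimal stationary policy, for each fixed $\lambda$, to a threshold policy whose passive (refusal) set is precisely $\{t^*(\lambda)+1, t^*(\lambda)+2,\dots\}$, where $t^*(\lambda)$ denotes the optimal threshold. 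Hence it suffices to show that $t^*(\lambda)$ is non-decreasing in $\lambda$ and that it attains the extreme thresholds as $\lambda\to-\infty$ and $\lambda\to+\infty$.

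First I would identify $t^*(\lambda)$ as a minimizer of the stationary average cost over the threshold family. Because Section~\ref{Sec 4} guarantees that the optimal policy lies within the one-parameter family of threshold policies, the optimal threshold must minimize the long-run average cost among them; by Lemma~\ref{Lemma 8}, this cost is exactly $g(\lambda,t)$. Adopting the smallest-minimizer convention used in Lemma~\ref{Lemma 7}, I would set $t^*(\lambda)=\inf\{t^\star : g(\lambda,t^\star)\le g(\lambda,t)\ \forall t\}$.

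Next I would apply the comparative-statics machinery. Lemma~\ref{Lemma 8} shows that $g$ is submodular, and Lemma~\ref{Lemma 7} then gives at once that $t^*(\lambda)$ is non-decreasing in $\lambda$, so the passive set $\{t^*(\lambda)+1,\dots\}$ is non-increasing, which is the required monotonicity. To obtain the full sweep from $S$ to the empty set, I would use the explicit form $g(\lambda,t)=C\sum_j j\mu_t(j)+\lambda\sum_{j>t}\mu_t(j)$ together with Lemma~\ref{Lemma 6}: since $\sum_{j>t}\mu_t(j)$ is decreasing in $t$, a sufficiently large positive $\lambda$ pushes the optimal threshold arbitrarily high (empty passive set), while a sufficiently negative $\lambda$ forces it to the lowest threshold (passive set equal to $S$).

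The main obstacle is the identification step, namely justifying rigorously that the structurally optimal threshold policy of Section~\ref{Sec 4} coincides with the minimizer of $g(\lambda,\cdot)$ over thresholds, and handling the two boundary limits cleanly---in particular the degenerate accept-nowhere policy, whose stationary law concentrates at $0$ by Lemma~\ref{Lemma1}. Once these are in place, the interior monotonicity is an immediate consequence of Lemmas~\ref{Lemma 7} and~\ref{Lemma 8}, and the two limits complete the claim.
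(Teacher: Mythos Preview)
Your proposal is correct and follows essentially the same route as the paper: reduce to threshold policies via Section~\ref{Sec 4}, identify the optimal threshold as the minimizer of $g(\lambda,\cdot)$, invoke Lemma~\ref{Lemma 8} for submodularity and Lemma~\ref{Lemma 7} for monotonicity of $t^*(\lambda)$, and conclude that the passive set shrinks monotonically. You are more careful than the paper about the two boundary limits and the identification step; the paper simply asserts $\rho(\lambda)=g(\lambda,t(\lambda))$ and that the passive set sweeps from the whole space to the empty set, without the extra justification you outline.
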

\begin{proof}
	By the unichain property, there exists a unique stationary distribution under any stationary policy. Let $\mu$ be the unique stationary distribution and $\mathcal{D}$ be the set of states for which the mBS does not admit the arriving user, if an arrival happens, under any stationary policy $\pi$ at the given $\lambda$. The expected average cost under our threshold policy is:  
	\begin{align*}
		\rho(\lambda) &=\inf_{\pi} \Bigg[C\sum_{k} k \mu(k)+\lambda 
		\sum_{k \in \mathcal{D}}\mu(k) \Bigg]
		&=g(\lambda,t(\lambda)).
	\end{align*}
	By Lemma~\ref{Lemma 8}, $g$ is submodular; hence, by Lemma~\ref{Lemma 7}, the threshold $t(\lambda)$ is a non-decreasing function of $\lambda$. The set of states $\mathcal{D}$, for which the mBS does not admit the arriving user if any, under the threshold stationary policy is of the form $[t(\lambda),\infty)$. Therefore, $\mathcal{D}$ monotonically decreases from the whole state space to the empty set as $\lambda$ increases from $-\infty$ to $+ \infty$. Hence the problem is Whittle indexable.
\end{proof}

\section{Computation of Whittle index} \label{Sec 6}
There is a large body of work on Whittle index computation schemes~\cite{nino2007characterization,glazebrook2009index,nino2012admission,nino2012towards,Borkar2017AOPR}. We use a recursive approach similar to the one used in~\cite{Borkar2017AOPR} to compute the Whittle index for each state. Under this approach, given the state $x$, $\lambda$ is updated as follows:
	\begin{align}
		\lambda_{t+1}=&\lambda_{t} +\alpha \Big(\sum_{i} p_{i|x}(1) V_{\lambda_{t}}(i)-\sum_{i} p_{i|x}(0) V_{\lambda_{t}}(i)-\lambda_{t} \Big), \label{EQ:lambda:iteration} \\
		& t\ge 0, \nonumber
	\end{align}
where $\alpha>0$ and $p_{\cdot|\cdot}(1)$ (respectively, $p_{\cdot|\cdot}(0)$) is the transition probability when the mBS admits (respectively, does not admit) an arrival for the current slot. This is an incremental scheme that adjusts the current guess for the index in the direction of decreasing the discrepancy in the values of the RHS of the dynamic programming equation (\ref{Eq8}) corresponding to the two actions-- mBS admits and does not admit an arrival, which should agree for the correct value of the index. The equations for $V_{\lambda}$ form a linear system, which can be solved after each iteration of \eqref{EQ:lambda:iteration} using the current value of $\lambda$. That is, we solve the following system of equations for $V = V_{\lambda_t}$ and $\rho=\rho(\lambda_{t})$ using $\lambda=\lambda_{t}$: 
	\begin{align*}
		&V(y)=Cy-\rho+\sum_z p_{z|y}(1)V(z), \quad y\le x,\\
		&V(y)=Cy+\lambda-\rho+\sum_z p_{z|y}(1)V(z), \quad y> x,\\
		&V(0)=0.
	\end{align*}	
The value to which the iteration \eqref{EQ:lambda:iteration} converges
yields the Whittle index for a fixed state $x$. To reduce the computational cost, the above iteration is performed for a sufficiently large number of states $x$, and the Whittle indices for the remaining states are computed by interpolation.

Under the Whittle index based policy, in each time slot in which an arrival occurs, the mBS with the smallest index admits the arrival.	

\section{Application of Our Results to Sub-6 GHz Networks} \label{Section: Application of our results to sub-6 GHz networks}
To formulate the problem of user association in sub-6 GHz networks as a restless bandits problem, it is required to model the processes of user arrivals in the network, user departure from BSs, how and when actions are taken, rewards and costs involved with transition from one state to another upon action, etc. The above can be done as in Section~\ref{section 2}. Also, results similar to those in this paper can be obtained for the context of sub-6 GHz networks. Note that in~\cite{kasbekar2006online,kasbekar2006onlinepolicies}, the stability of user association policies for sub-6 GHz networks was studied. Similar techniques were used and analogous results were obtained for mmWave networks in~\cite{Gupta21Access}.

\section{Other policies for comparison}	
\label{Sec 7}
In this section, we briefly describe the load based, SNR based, throughput based and mixed policies, whose performance in mmWave networks was evaluated via analysis and simulations in~\cite{Gupta21Access}, and the random policy. In Section~\ref{Sec 8}, we compare the performance of our proposed Whittle index based policy with the above five policies via simulations. 
	
\subsection{Load based policy}
\label{subs 7A} 
Under this policy, in each slot in which an arrival occurs, the mBS with the minimum number of users in its queue at the beginning of the slot admits the arrival (ties are broken at random). That is, in slot $n$, mBS $\argmin_{i\in \{1,2,\cdots,K\}} X_n^i$ admits the arriving user if an arrival occurs.

\subsection{SNR based policy}\label{subs 7B}
Under this policy, in each slot in which an arrival occurs, the mBS that provides the highest data rate admits the arrival. That is, in slot $n$, mBS $\argmax_{i\in \{1,2,\cdots,K\}} r_i$ admits the arriving user if an arrival occurs.

\subsection{Throughput based policy}\label{subs 7C}
Under this policy, in each slot in which an arrival occurs, the mBS that provides the highest throughput upon association admits the arrival (ties are broken at random). That is, in slot $n$, mBS $\argmax_{i\in \{1,2,\cdots,K\}} \frac{r_i}{X^i_n+1}$ admits the arriving user if an arrival occurs.

\subsection{Mixed policy}\label{subs 7D} 
Under this policy, in each slot in which an arrival occurs, the mBS that has the highest weighted sum of the data rate and a positive scalar times the throughput upon association at the beginning of the slot admits the arrival (ties are broken at random).  That is, in slot $n$, mBS $\argmax_{i\in \{1,2,\cdots,K\}} \Big(0.2* r_i+ \frac{r_i}{X^i_n+1} \Big) $ admits the arriving user if an  arrival occurs. The reason for choosing the particular value, $0.2$, for the positive scalar is that it was shown to result in good performance of the policy in~\cite{kasbekar2006online}.  

\subsection{Random policy}
Under this policy, in each slot in which an arrival occurs, an mBS that is selected uniformly at random out of the $K$ mBSs admits the arrival. 
	
\section{Simulations} \label{Sec 8}
In this section, we compare the performance of the proposed Whittle index based user association policy  with those of the SNR based, load based, throughput based, mixed and random user association policies via simulations. We use the following performance metrics: long-run average per slot cost incurred at mBSs, average delay, i.e., the average difference between the time slots at which a user departs and arrives, and blocking probability, i.e., the ratio of the total number of arrivals that see full buffers at all the mBSs and are hence blocked, and the total number of arrivals, under an association policy. 

To show that our proposed Whittle index based association policy is robust in the sense that it outperforms other association policies briefly described in Section~\ref{Sec 7} in all mmWave network scenarios, we consider ten different mmWave network scenarios by varying  mmWave network parameters along three different axes-- $p$, $<K,r>$, and  $C$. Note that $r=[r_1, \cdots,r_K]$ (respectively, $C=[C_1,\cdots,C_K]$) denotes the vector of data rates (respectively, costs). Recall that the $i$'th component of $r$ (respectively, $C$) is the data rate (respectively, cost) corresponding to the $i$'th mBS of the network. We consider two mmWave network scenarios with respect to $<K,r>$-- the first with $K=5$ and $r = [0.55, 0.52, 0.50, 0.48, 0.45]$ and the second with $K=10$ and  $r = [0.75, 0.65, 0.62, 0.60, 0.55, 0.52, 0.50, 0.48, 0.45, 0.42]$.  We consider three arrival scenarios, two with fixed arrival  probabilities-- $p=0.4$ and $p=0.9$-- representing light and heavy load, respectively, and one with dynamically selected arrival probabilities-- in particular, in each slot, $p$ takes a  value that is selected uniformly at random from the range $[0.01, 0.99]$ and independently of the values in other slots. We consider two scenarios with respect to $C$-- the first increasing in its component value and the second decreasing in its component value. We assume that at the beginning of the simulations, the network is idle, i.e., the initial state profile is the all zeros vector, in all the considered mmWave network scenarios. The time horizon, $T$, used for the simulations is $20000$ slots. We plot the performance of the  six policies in terms of the average cost in the last $10000$ slots only; this is because we are interested in the long-run average cost.

In Figs.~\ref{fig2}-\ref{fig6}, we have plotted the long-run average cost under the six association policies versus time for different parameter values. It can be seen that \emph{in all the plots, the Whittle index based association policy outperforms all the other association policies}. Note that although the SNR  based user association policy has been extensively used in wireless  networks in practice, in Figs.~\ref{fig2}-\ref{fig3},~\ref{fig4b},~\ref{fig5} and~\ref{fig6}, it performs significantly worse than the  Whittle index based, load based, throughput based and mixed policies; also, in  Figs.~\ref{fig2b},~\ref{fig3},~\ref{fig5} and~\ref{fig6}, it performs even  worse than the random policy. The reason is that the SNR based user association policy ignores the numbers of users currently associated with different mBSs and hence leads to load imbalance.

Based on the observations from simulations in different mmWave network scenarios, we can provide the following conclusion about the impact on the performance of our proposed association policy as we vary mmWave network parameters. For a given $K,r, \mbox{ and } C$, as the arrival probability of the mmWave network varies from low load corresponding to a low value of $p$ to a high load corresponding to a high value of $p$, the performance gap between our proposed association policy and its closest competitor association policy starts decreasing.

To compare the performance of the association policies in terms of the average delay and the blocking probability, we vary the number of mBSs, $K$, from $2$ to $6$. For each value of $K$, $p$ and the sum of the components of $r$ are the same and equal $0.8$. We use the vectors of data rates: $r = [0.6, 0.2]$, $[0.4, 0.2667, 0.1333]$, $[0.3, 0.2333, 0.1667, 0.1]$, $[0.24, 0.2, 0.16, 0.12, 0.08]$, and $[0.2, 0.1733, 0.1467, 0.12, 0.0933, 0.0667]$ for the mmWave networks with $K=2$, $3$, $4$, $5$, and $6$, respectively. Also, we use the vectors of costs: $C = [10, 30]$, $[10, 20, 30]$, $[10, 16.67, 23.54, 30]$, $[10, 15, 20, 25, 30]$, and $[10, 14, 18, 22, 26, 30]$ for the mmWave networks with $K=2$, $3$, $4$, $5$, and $6$, respectively. The buffer size of each mBS is assumed to be $20$ for each value of $K$. For all the simulations done for studying the average delay and the blocking probability, we assume that  at the beginning of the simulations, the network is idle. The time horizon, $T$, used for the simulation of the network with $K$ mBSs is $K \times 5000$ slots. 

We have plotted the average delay and blocking probability under the six association policies versus $K$ in Figs.~\ref{fig7a} and~\ref{fig7b}, respectively. It can be seen that \emph{in terms of both average delay and blocking probability, the Whittle index based policy outperforms all the other association policies}.

\begin{figure}
	\centering
	\begin{subfigure}{.49\textwidth}
		\centering
		\includegraphics[width=1.12\linewidth]{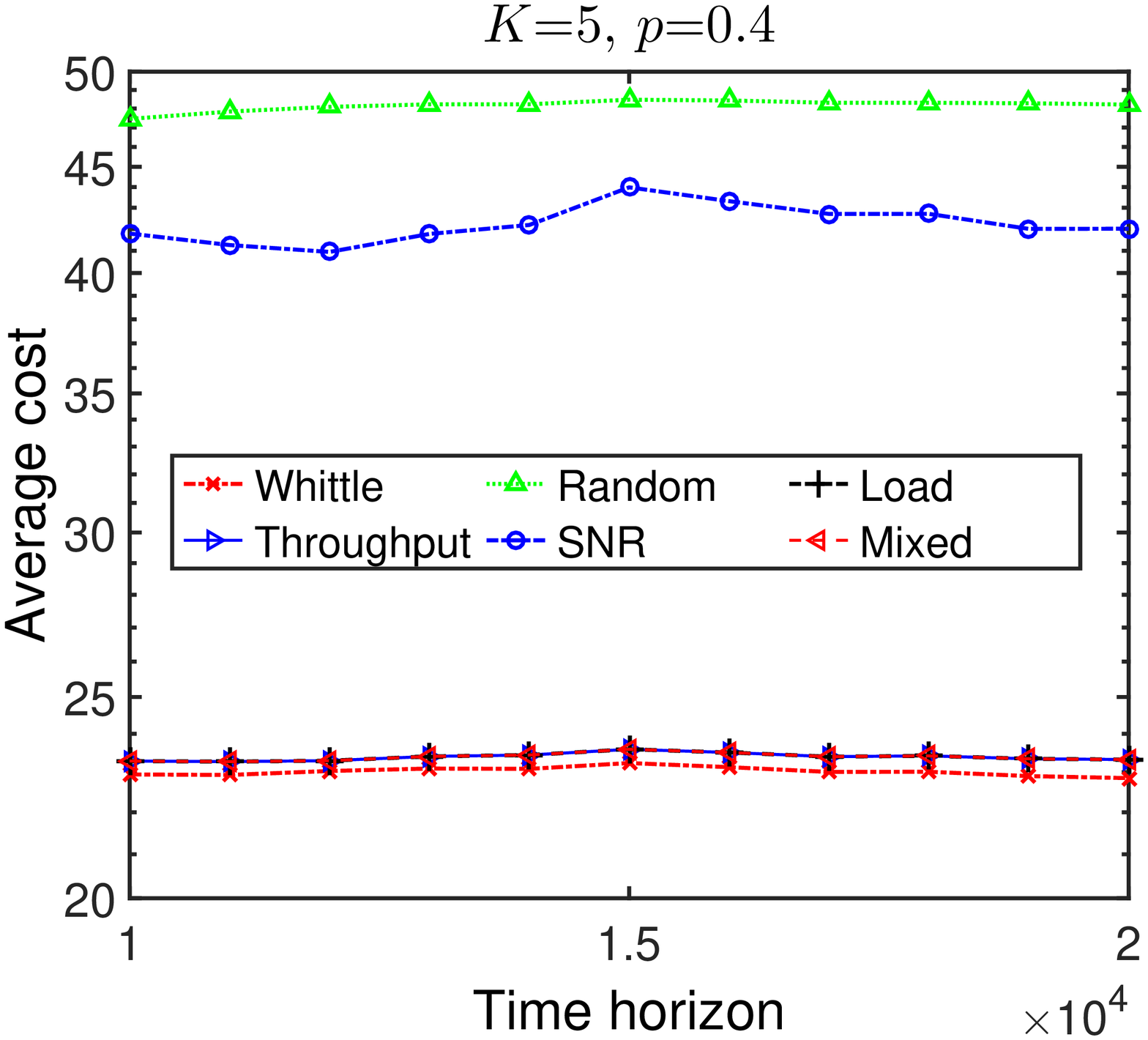}
		\caption{}
		\label{fig2a}
	\end{subfigure}
	\begin{subfigure}{.49\textwidth}
		\centering
		\includegraphics[width=1.12\linewidth]{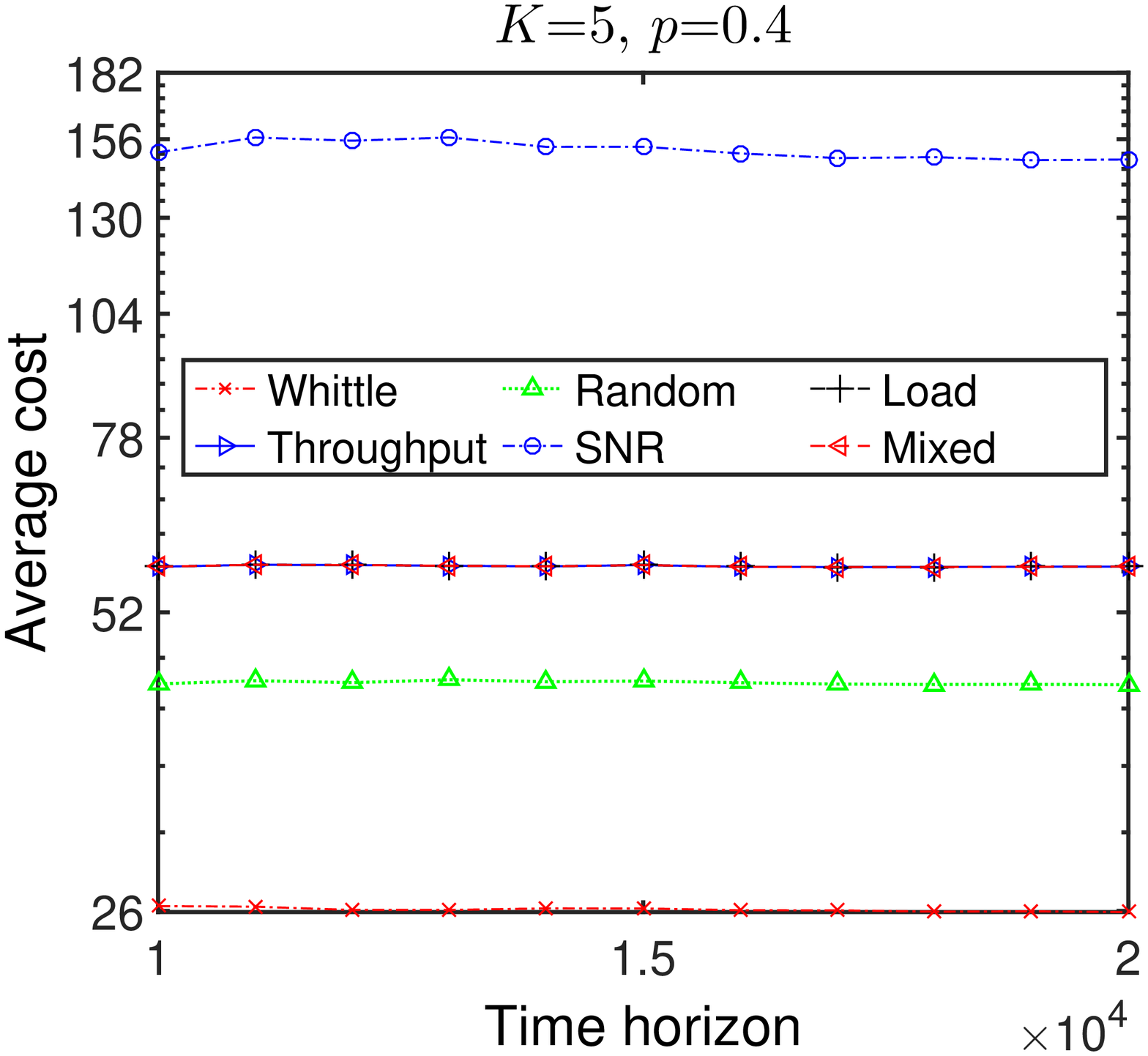}
		\caption{}
		\label{fig2b}
	\end{subfigure}
	\caption{The plots show a comparison of the average costs under the six association policies for a network with $p=0.4$ and $K = 5$. The parameter $C = [25, 35, 45, 60, 95]$ for Fig.~\ref{fig2a} and  $C = [95, 60, 45, 35, 25]$ for Fig.~\ref{fig2b}.}
	\label{fig2}
\end{figure}

\begin{figure}
	\centering
	\begin{subfigure}{.49\textwidth}
		\centering
		\includegraphics[width=1.12\linewidth]{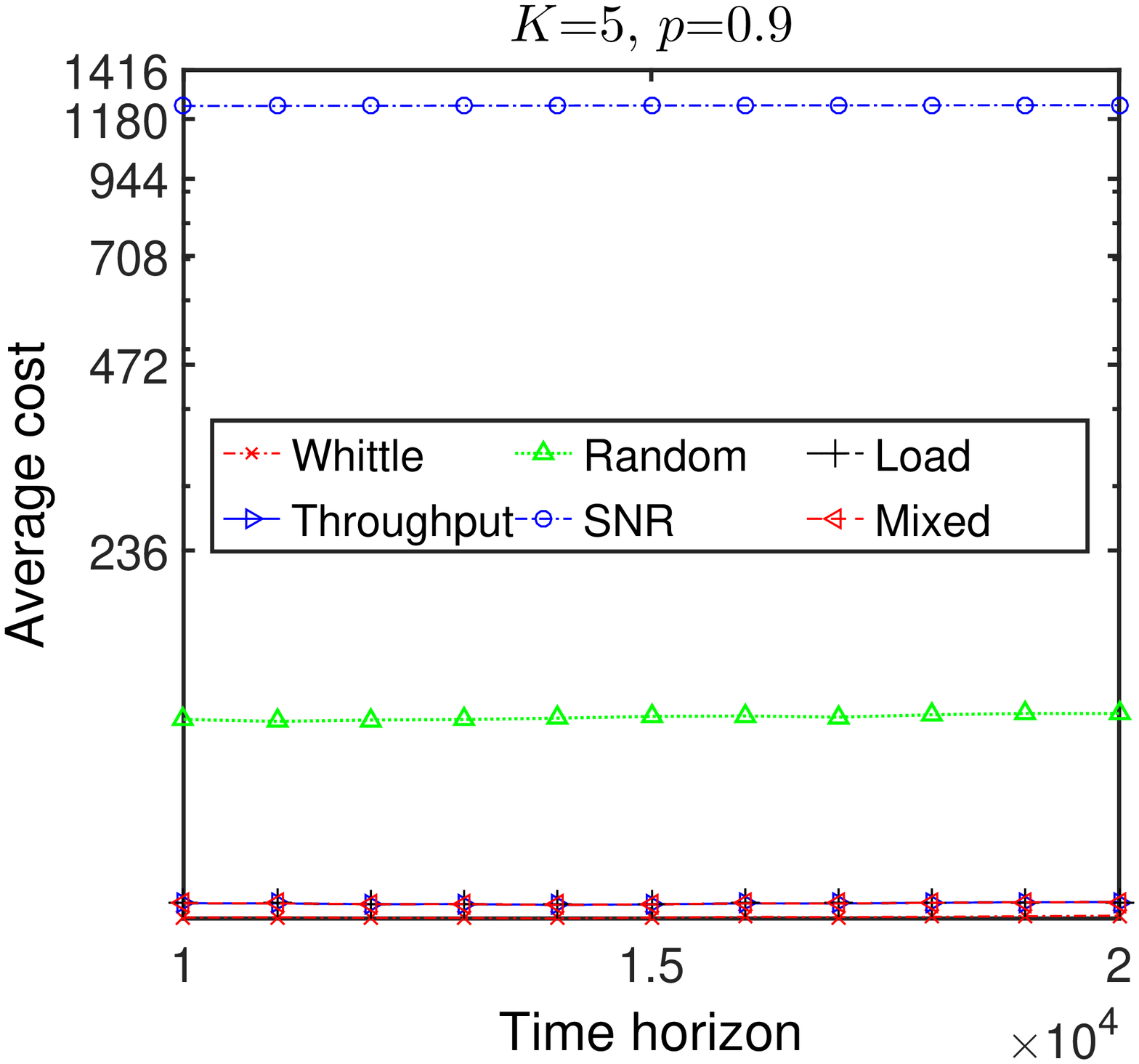}
		\caption{}
		\label{fig3a}
	\end{subfigure}
	\begin{subfigure}{.49\textwidth}
		\centering
		\includegraphics[width=1.12\linewidth]{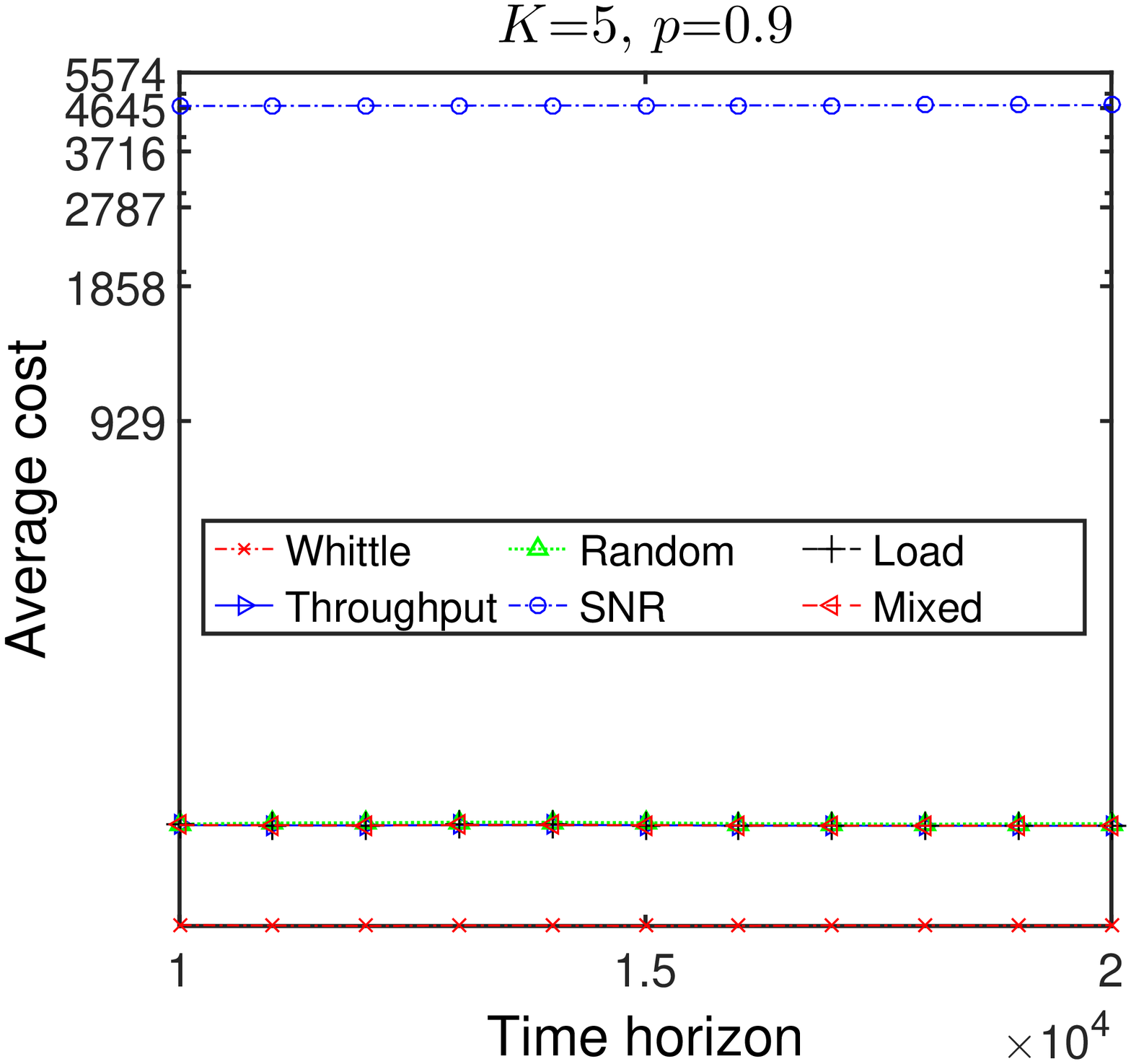}
		\caption{}
		\label{fig3b}
	\end{subfigure}
	\caption{The plots show a comparison of the average costs under the six association policies for a network with $p=0.9$ and $K = 5$. The parameter $C = [25, 35, 45, 60, 95]$ for Fig.~\ref{fig3a} and  $C = [95, 60, 45, 35, 25]$ for Fig.~\ref{fig3b}.}
	\label{fig3}
\end{figure}

\begin{figure}
	\centering
	\begin{subfigure}{.49\textwidth}
		\centering
		\includegraphics[width=1.12\linewidth]{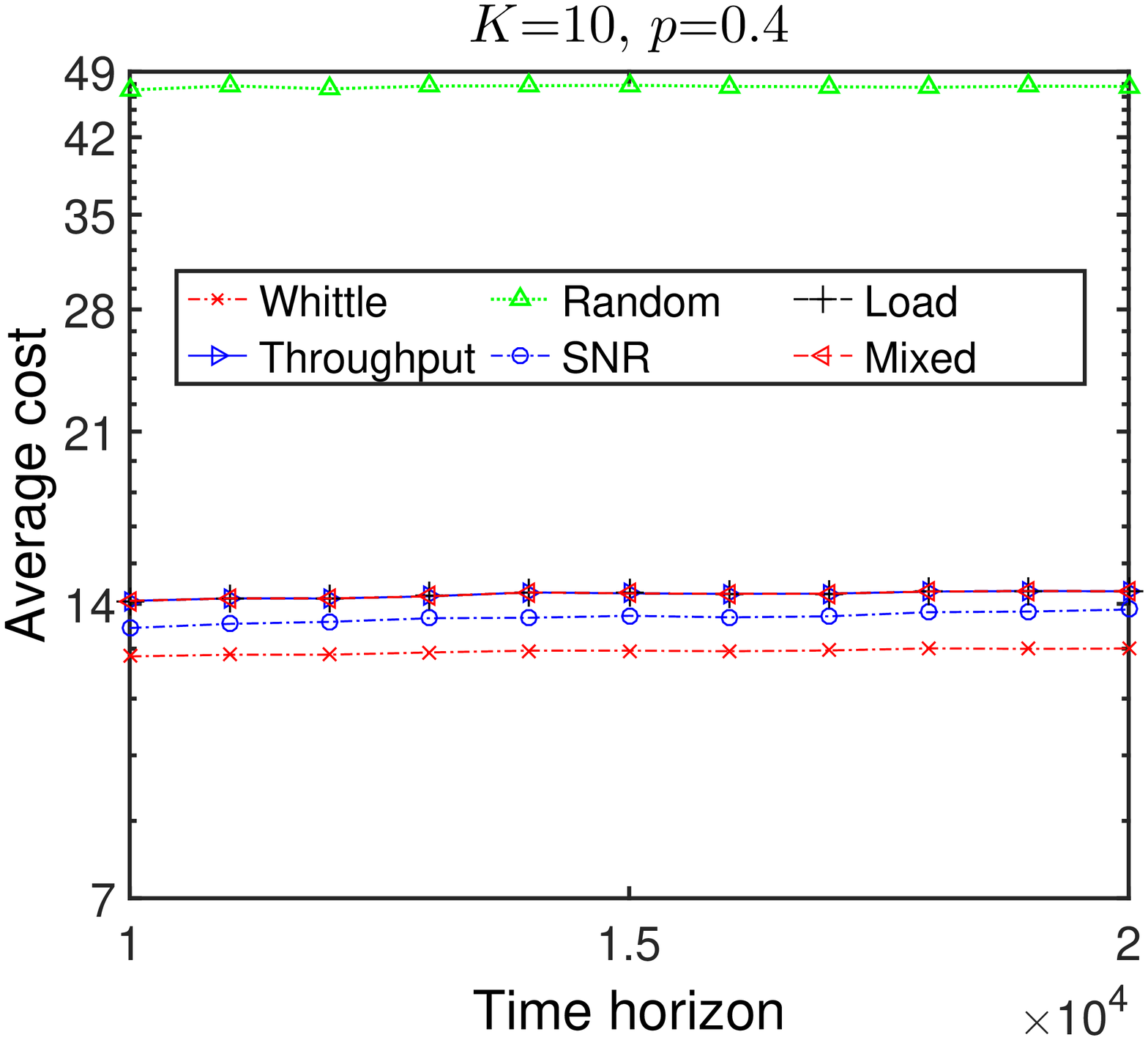}
		\caption{}
		\label{fig4a}
	\end{subfigure}
	\begin{subfigure}{.49\textwidth}
		\centering
		\includegraphics[width=1.12\linewidth]{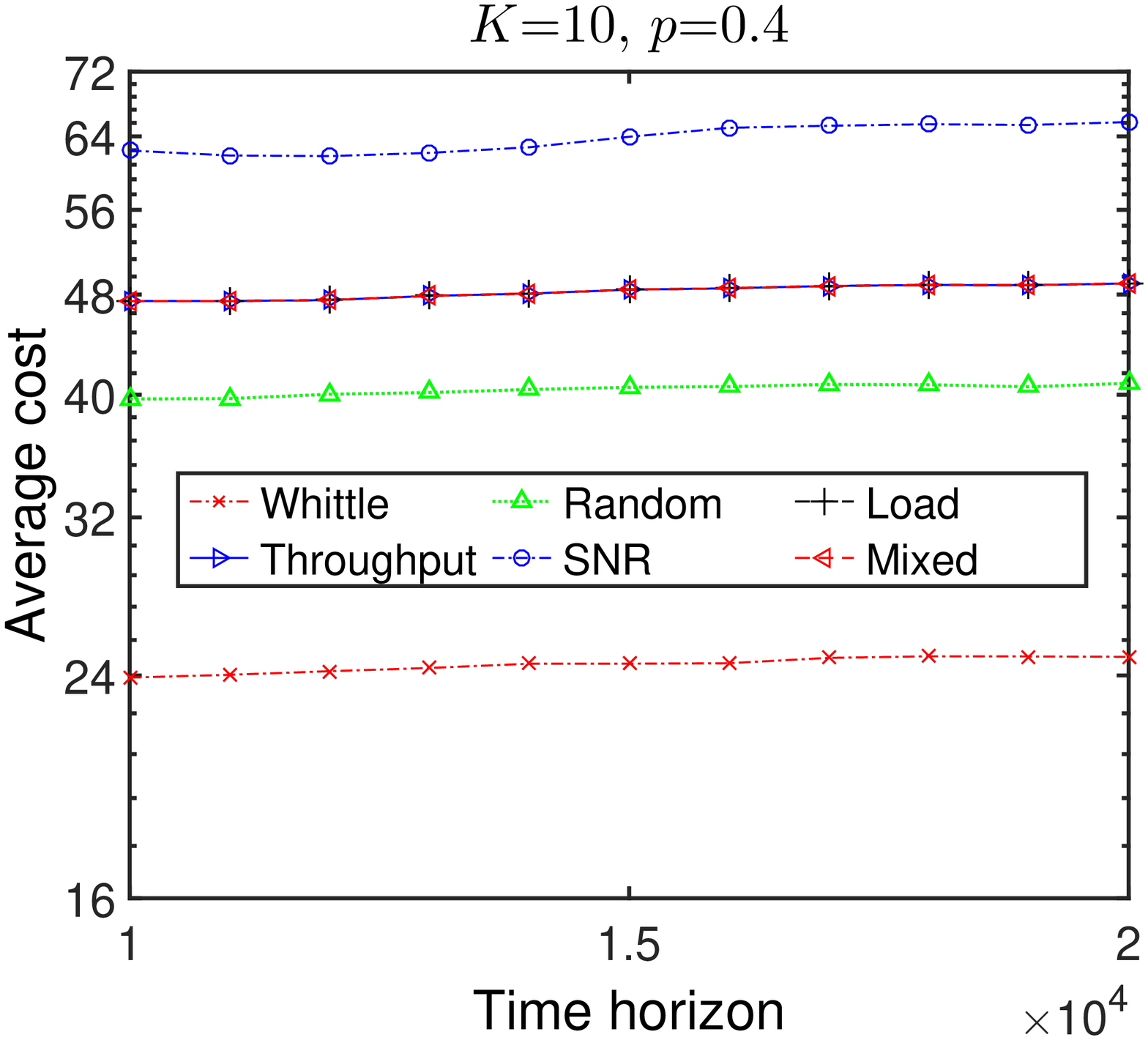}
		\caption{}
		\label{fig4b}
	\end{subfigure}
	\caption{The plots show a comparison of the average costs under the six association policies for a network with $p=0.4$ and $K = 10$. The parameter $C = [20, 32, 45, 50, 55, 60, 65, 70, 75, 95]$ for Fig.~\ref{fig4a} and $C = [95, 75, 70, 65, 60, 55, 50, 45, 32, 20]$ for Fig.~\ref{fig4b}.}
	\label{fig4}
\end{figure}

\begin{figure}
	\centering
	\begin{subfigure}{.49\textwidth}
		\centering
		\includegraphics[width=1.12\linewidth]{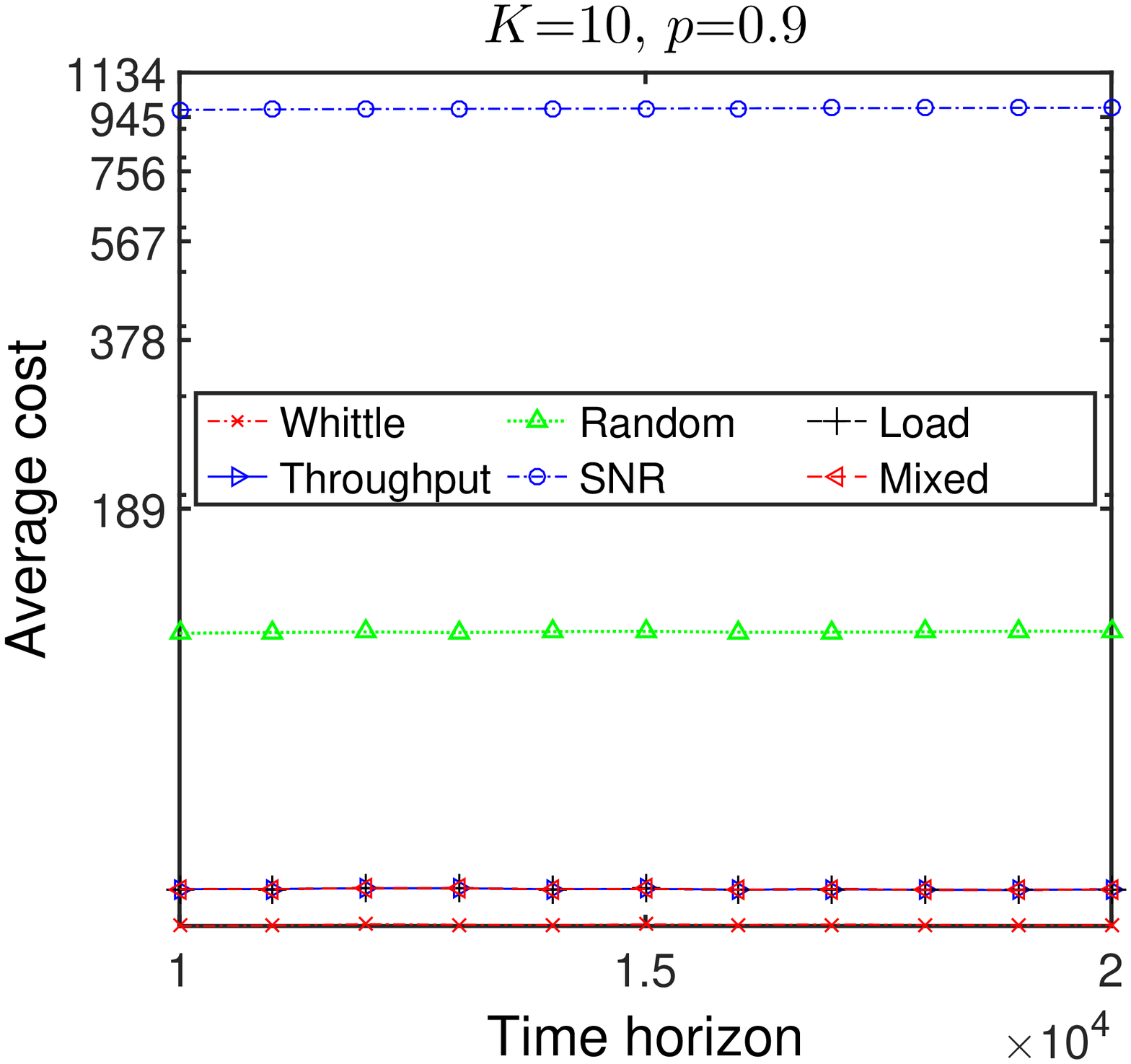}
		\caption{}
		\label{fig5a}
	\end{subfigure}
	\begin{subfigure}{.49\textwidth}
		\centering
		\includegraphics[width=1.12\linewidth]{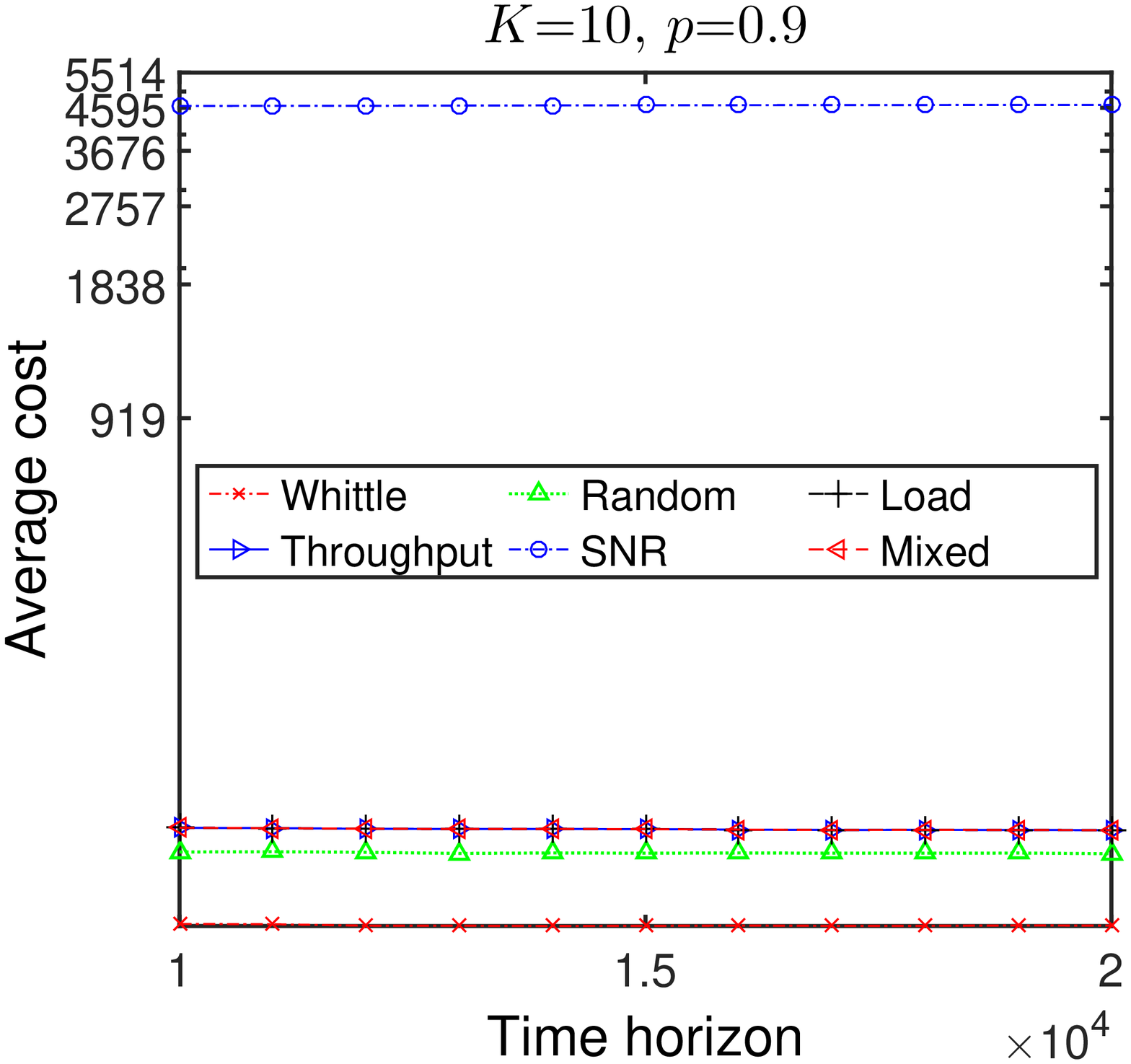}
		\caption{}
		\label{fig5b}
	\end{subfigure}
	\caption{The plots show a comparison of the average costs under the six association policies for a network with $p=0.9$ and $K = 10$. The parameter $C = [20, 32, 45, 50, 55, 60, 65, 70, 75, 95]$ for Fig.~\ref{fig5a} and $C = [95, 75, 70, 65, 60, 55, 50, 45, 32, 20]$ for Fig.~\ref{fig5b}.}
\label{fig5}
\end{figure}

\begin{figure}
	\centering
	\begin{subfigure}{.49\textwidth}
		\centering
		\includegraphics[width=1.12\linewidth]{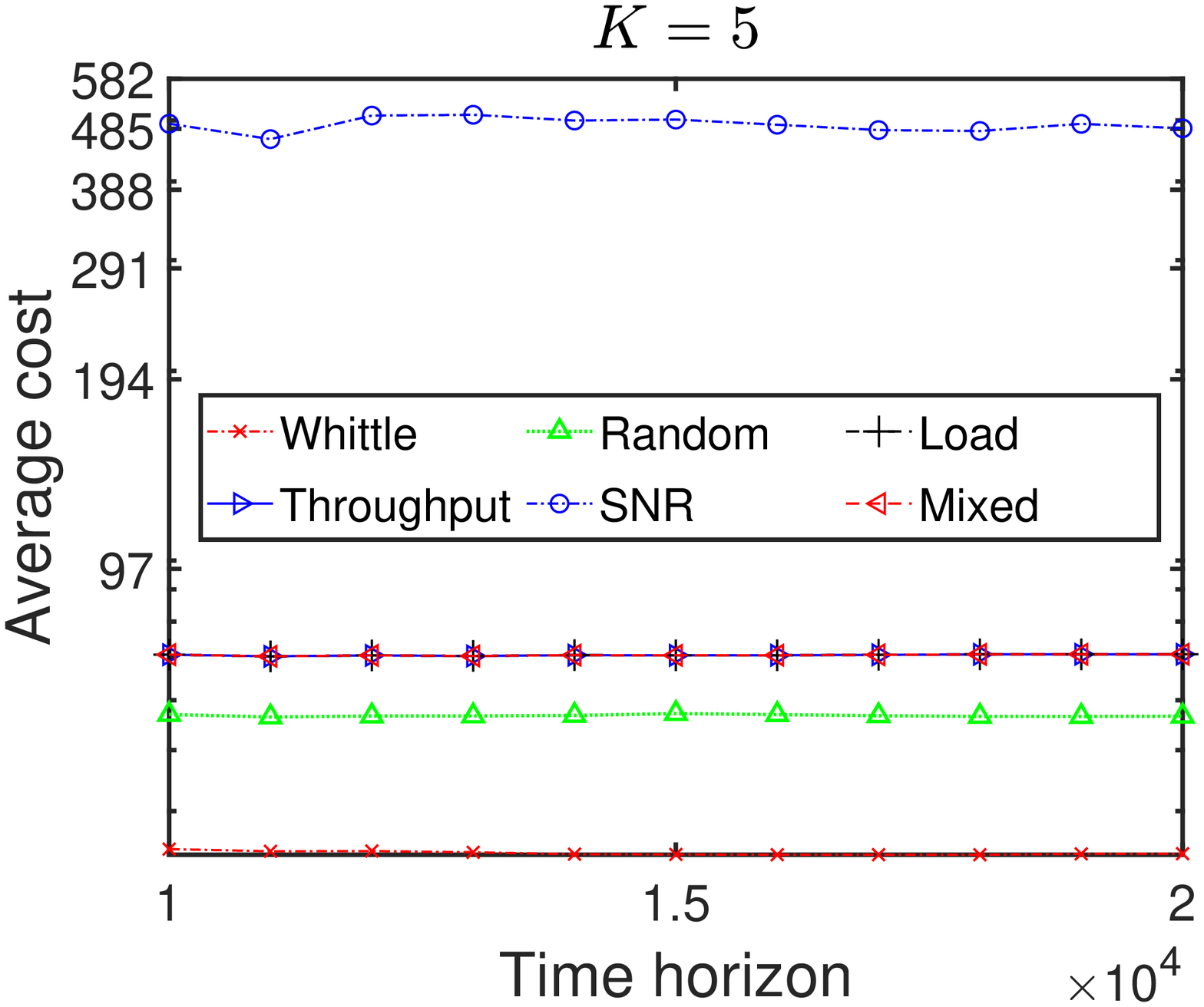}
		\caption{}
		\label{fig6a}
	\end{subfigure}
	\begin{subfigure}{.49\textwidth}
		\centering
		\includegraphics[width=1.12\linewidth]{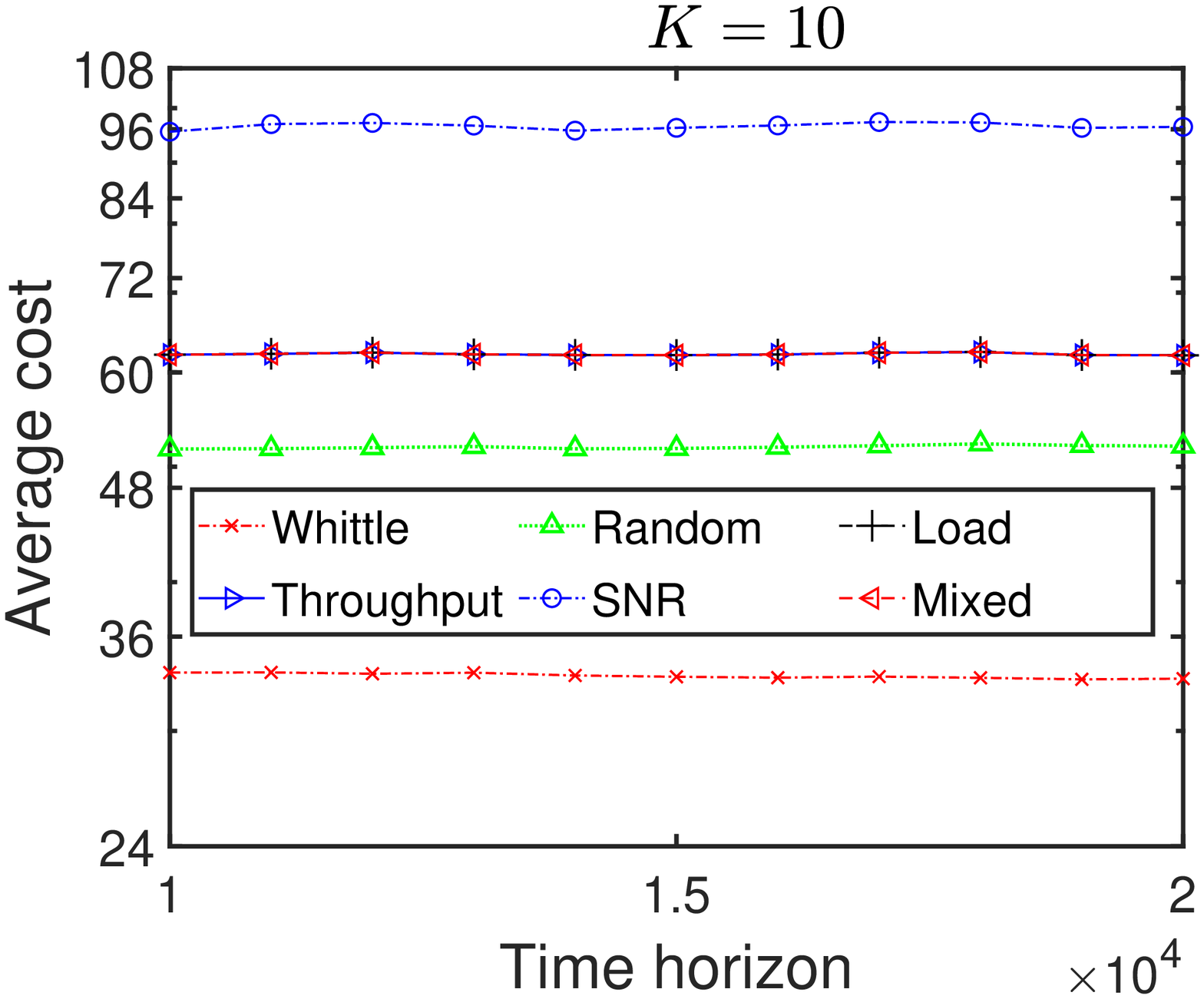}
		\caption{}
		\label{fig6b}
	\end{subfigure}
	\caption{The plots show a comparison of the average costs under the six association policies for a network with dynamically selected $p$. The parameters for Fig.~\ref{fig6a} (respectively, Fig.~\ref{fig6b}) are $K=5$ and $C = [95, 60, 45, 35, 25]$  (respectively, $K=10$ and $C = [95, 75, 70, 65, 60, 55, 50, 45, 32, 20]$).}
	\label{fig6}
\end{figure}

\begin{figure}
	\centering
	\begin{subfigure}{.49\textwidth}
		\centering
		\includegraphics[width=1.12\linewidth]{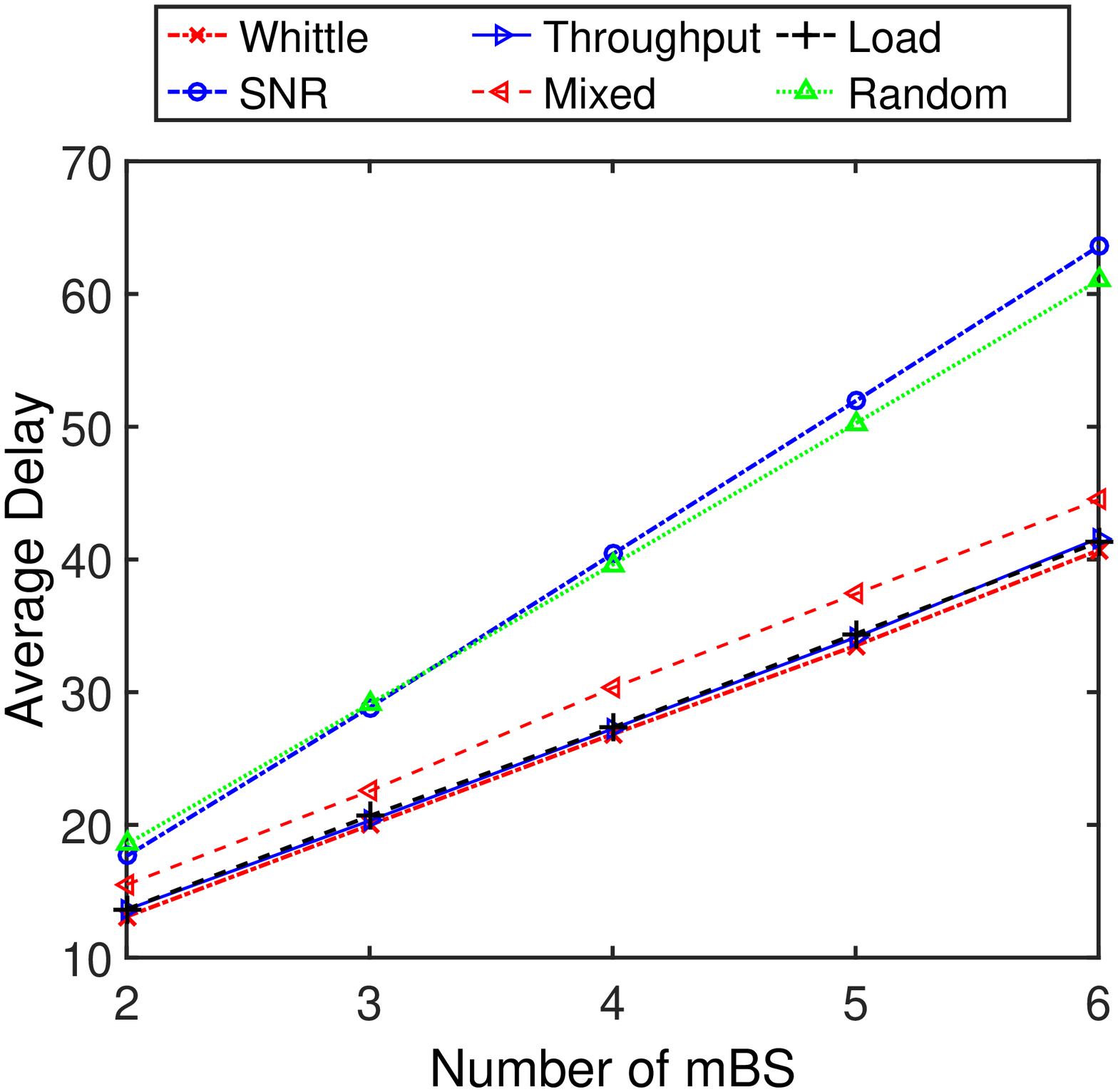}
		\caption{}
		\label{fig7a}
	\end{subfigure}
	\begin{subfigure}{.49\textwidth}
		\centering
		\includegraphics[width=1.12\linewidth]{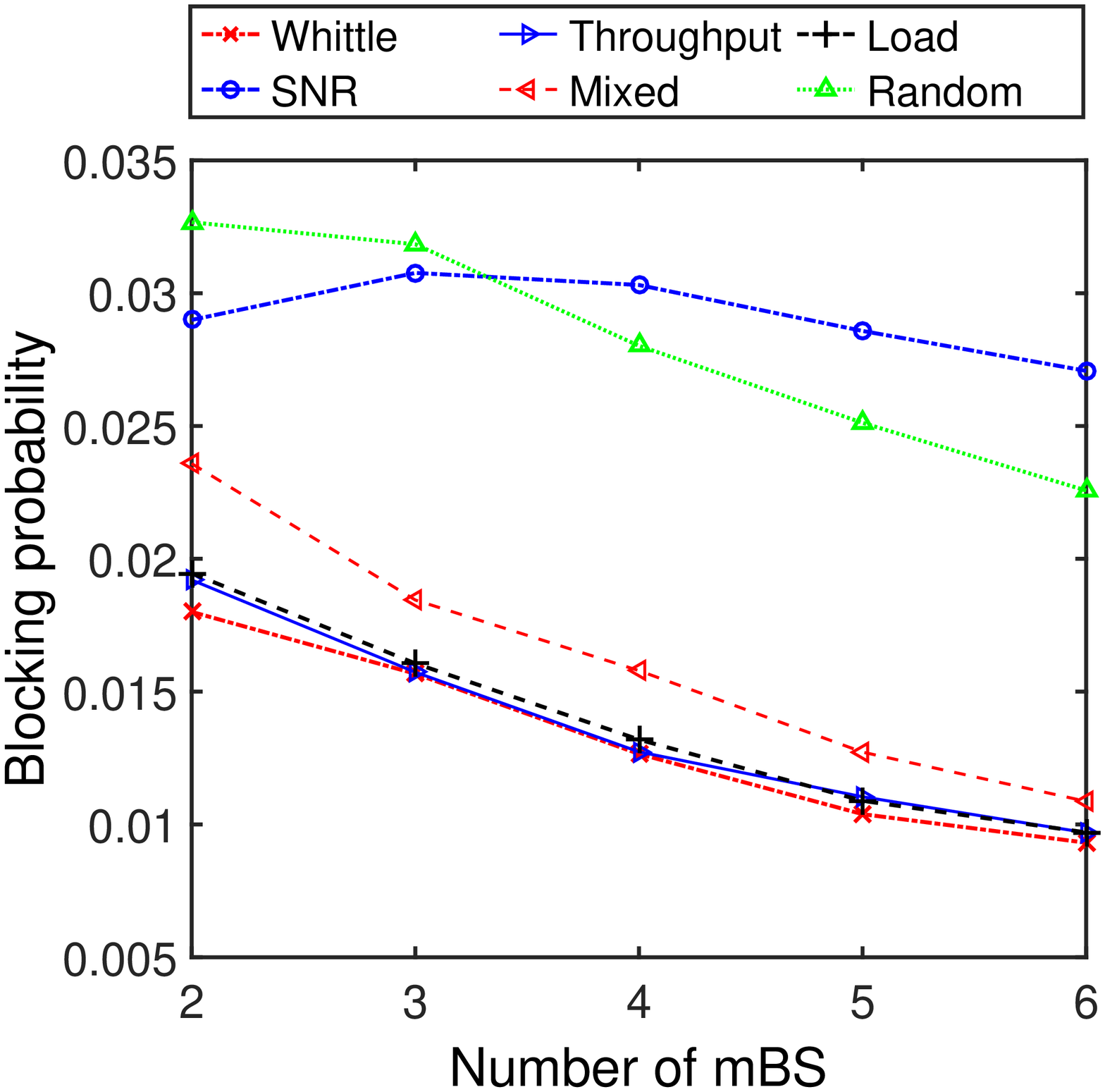}
		\caption{}
		\label{fig7b}
	\end{subfigure}
	\caption{Fig.~\ref{fig7a} (respectively, Fig.~\ref{fig7b}) shows a comparison of the average delay (respectively, blocking probability) under the six association policies for different values of $K$.}
	\label{fig7}
\end{figure}

\section{Conclusions}
\label{SC:conclusions}
In this paper, we studied the problem of user association, i.e., determining which BS an arriving user should associate with, in a dense mmWave network.  
We formulated this as a restless multi-armed bandit problem, which is provably hard to solve. We established the Whittle indexability of the problem, and based on this result, devised an association policy, in which an arriving user  associates with the BS that has the smallest Whittle index. Using simulations, we showed that our Whittle index based association policy outperforms the  SNR based, throughput based, load based and mixed policies proposed in prior work. 

	\bibliography{library} 

\begin{thebibliography}{10}
\providecommand{\url}[1]{#1}
\csname url@samestyle\endcsname
\providecommand{\newblock}{\relax}
\providecommand{\bibinfo}[2]{#2}
\providecommand{\BIBentrySTDinterwordspacing}{\spaceskip=0pt\relax}
\providecommand{\BIBentryALTinterwordstretchfactor}{4}
\providecommand{\BIBentryALTinterwordspacing}{\spaceskip=\fontdimen2\font plus
\BIBentryALTinterwordstretchfactor\fontdimen3\font minus
  \fontdimen4\font\relax}
\providecommand{\BIBforeignlanguage}[2]{{%
\expandafter\ifx\csname l@#1\endcsname\relax
\typeout{** WARNING: IEEEtran.bst: No hyphenation pattern has been}%
\typeout{** loaded for the language `#1'. Using the pattern for}%
\typeout{** the default language instead.}%
\else
\language=\csname l@#1\endcsname
\fi
#2}}
\providecommand{\BIBdecl}{\relax}
\BIBdecl

\bibitem{Cisco}
\BIBentryALTinterwordspacing
Cisco, ``{Cisco Annual Internet Report (2018–2023) White Paper}.'' [Online].
  Available:
  \url{https://www.cisco.com/c/en/us/solutions/collateral/executive-perspectives/annual-internet-report/white-paper-c11-741490.html}
\BIBentrySTDinterwordspacing

\bibitem{Andrews14JSAC}
J.~G. Andrews, S.~Buzzi, W.~Choi, S.~V. Hanly, A.~Lozano, A.~C. Soong, and
  J.~C. Zhang, ``What will {5G} be?'' \emph{IEEE Journal on Selected Areas in
  Communications}, vol.~32, no.~6, pp. 1065--1082, 2014.

\bibitem{Agiwal16COMST}
M.~Agiwal, A.~Roy, and N.~Saxena, ``Next generation {5G} wireless networks: A
  comprehensive survey,'' \emph{IEEE Communications Surveys \& Tutorials},
  vol.~18, no.~3, pp. 1617--1655, 2016.

\bibitem{Cudak13VTC}
M.~Cudak, A.~Ghosh, T.~Kovarik, R.~Ratasuk, T.~A. Thomas, F.~W. Vook, and
  P.~Moorut, ``Moving towards mmwave-based beyond-{4G (B-4G)} technology,'' in
  \emph{2013 IEEE 77th Vehicular Technology Conference (VTC Spring)}.\hskip 1em
  plus 0.5em minus 0.4em\relax IEEE, 2013, pp. 1--5.

\bibitem{Niu15WNET}
Y.~Niu, Y.~Li, D.~Jin, L.~Su, and A.~V. Vasilakos, ``A survey of millimeter
  wave communications {(mmWave) for 5G}: opportunities and challenges,''
  \emph{Wireless Networks}, vol.~21, no.~8, pp. 2657--2676, 2015.

\bibitem{Zhao13ICC}
H.~Zhao, R.~Mayzus, S.~Sun, M.~Samimi, J.~K. Schulz, Y.~Azar, K.~Wang, G.~N.
  Wong, F.~Gutierrez, and T.~S. Rappaport, ``28 {GHz} millimeter wave cellular
  communication measurements for reflection and penetration loss in and around
  buildings in {New York} city,'' in \emph{2013 IEEE international conference
  on communications (ICC)}.\hskip 1em plus 0.5em minus 0.4em\relax IEEE, 2013,
  pp. 5163--5167.

\bibitem{Feng17JSAC}
W.~Feng, Y.~Wang, D.~Lin, N.~Ge, J.~Lu, and S.~Li, ``When {mmWave}
  communications meet network densification: A scalable interference
  coordination perspective,'' \emph{IEEE Journal on Selected Areas in
  Communications}, vol.~35, no.~7, pp. 1459--1471, 2017.

\bibitem{Yang18TCOMM}
G.~Yang, M.~Xiao, and H.~V. Poor, ``Low-latency millimeter-wave communications:
  Traffic dispersion or network densification?'' \emph{IEEE Transactions on
  Communications}, vol.~66, no.~8, pp. 3526--3539, 2018.

\bibitem{Sakaguchi17TIEICE}
K.~Sakaguchi, T.~Haustein, S.~Barbarossa, E.~C. Strinati, A.~Clemente,
  G.~Destino, A.~P{\"a}rssinen, I.~Kim, H.~Chung, J.~Kim \emph{et~al.},
  ``Where, when, and how {mmWave} is used in {5G} and beyond,'' \emph{IEICE
  Transactions on Electronics}, vol. 100, no.~10, pp. 790--808, 2017.

\bibitem{Liu16COMST}
D.~Liu, L.~Wang, Y.~Chen, M.~Elkashlan, K.-K. Wong, R.~Schober, and L.~Hanzo,
  ``User association in {5G} networks: A survey and an outlook,'' \emph{IEEE
  Communications Surveys \& Tutorials}, vol.~18, no.~2, pp. 1018--1044, 2016.

\bibitem{Attiah20WNET}
M.~L. Attiah, A.~A.~M. Isa, Z.~Zakaria, M.~Abdulhameed, M.~K. Mohsen, and
  I.~Ali, ``A survey of {mmWave} user association mechanisms and spectrum
  sharing approaches: an overview, open issues and challenges, future research
  trends,'' \emph{Wireless Networks}, vol.~26, no.~4, pp. 2487--2514, 2020.

\bibitem{Xu2015JSYST}
Y.~Xu and S.~Mao, ``User association in massive {MIMO HetNets},'' \emph{IEEE
  Systems Journal}, vol.~11, no.~1, pp. 7--19, 2015.

\bibitem{Feng17TVT}
M.~Feng and S.~Mao, ``Interference management and user association for nested
  array-based massive {MIMO HetNets},'' \emph{IEEE Transactions on Vehicular
  Technology}, vol.~67, no.~1, pp. 454--466, 2017.

\bibitem{Ma17LWC}
J.~Ma, S.~Zhang, H.~Li, N.~Zhao, and V.~C. Leung, ``Base station selection for
  massive {MIMO} networks with two-stage precoding,'' \emph{IEEE Wireless
  Communications Letters}, vol.~6, no.~5, pp. 598--601, 2017.

\bibitem{Sun18TVT}
Y.~Sun, G.~Feng, S.~Qin, and S.~Sun, ``Cell association with user behavior
  awareness in heterogeneous cellular networks,'' \emph{IEEE Transactions on
  Vehicular Technology}, vol.~67, no.~5, pp. 4589--4601, 2018.

\bibitem{Dong19TVT}
R.~Dong, A.~Li, W.~Hardjawana, Y.~Li, X.~Ge, and B.~Vucetic, ``Joint
  beamforming and user association scheme for full-dimension massive {MIMO}
  networks,'' \emph{IEEE Transactions on Vehicular Technology}, vol.~68, no.~8,
  pp. 7733--7746, 2019.

\bibitem{Khalili20LWC}
A.~Khalili, S.~Akhlaghi, H.~Tabassum, and D.~W.~K. Ng, ``Joint user association
  and resource allocation in the uplink of heterogeneous networks,'' \emph{IEEE
  Wireless Communications Letters}, vol.~9, no.~6, pp. 804--808, 2020.

\bibitem{Kim11TNET}
H.~Kim, G.~De~Veciana, X.~Yang, and M.~Venkatachalam, ``Distributed
  alpha-optimal user association and cell load balancing in wireless
  networks,'' \emph{IEEE/ACM Transactions on Networking}, vol.~20, no.~1, pp.
  177--190, 2011.

\bibitem{Ye13TWC}
Q.~Ye, B.~Rong, Y.~Chen, M.~Al-Shalash, C.~Caramanis, and J.~G. Andrews, ``User
  association for load balancing in heterogeneous cellular networks,''
  \emph{IEEE Transactions on Wireless Communications}, vol.~12, no.~6, pp.
  2706--2716, 2013.

\bibitem{Alizadeh20GLOBECOM}
A.~Alizadeh and M.~Vu, ``Multi-armed bandit load balancing user association in
  {5G} cellular {HetNets},'' in \emph{GLOBECOM 2020 - 2020 IEEE Global
  Communications Conference}, 2020, pp. 1--6.

\bibitem{Shen14JSAC}
K.~Shen and W.~Yu, ``Distributed pricing-based user association for downlink
  heterogeneous cellular networks,'' \emph{IEEE Journal on Selected Areas in
  Communications}, vol.~32, no.~6, pp. 1100--1113, 2014.

\bibitem{Fang20TWC}
F.~Fang, G.~Ye, H.~Zhang, J.~Cheng, and V.~C. Leung, ``Energy-efficient joint
  user association and power allocation in a heterogeneous network,''
  \emph{IEEE Transactions on Wireless Communications}, vol.~19, no.~11, pp.
  7008--7020, 2020.

\bibitem{Zarandi20TGCN}
S.~Zarandi, A.~Khalili, M.~Rasti, and H.~Tabassum, ``Multi-objective energy
  efficient resource allocation and user association for in-band full duplex
  small-cells,'' \emph{IEEE Transactions on Green Communications and
  Networking}, vol.~4, no.~4, pp. 1048--1060, 2020.

\bibitem{Bethanabhotla15TWC}
D.~Bethanabhotla, O.~Y. Bursalioglu, H.~C. Papadopoulos, and G.~Caire,
  ``Optimal user-cell association for massive {MIMO} wireless networks,''
  \emph{IEEE Transactions on Wireless Communications}, vol.~15, no.~3, pp.
  1835--1850, 2015.

\bibitem{Khan2019TCCN}
H.~Khan, A.~Elgabli, S.~Samarakoon, M.~Bennis, and C.~S. Hong, ``Reinforcement
  learning-based vehicle-cell association algorithm for highly mobile
  millimeter wave communication,'' \emph{IEEE Transactions on Cognitive
  Communications and Networking}, vol.~5, no.~4, pp. 1073--1085, 2019.

\bibitem{Sana20TWC}
M.~Sana, A.~De~Domenico, W.~Yu, Y.~Lostanlen, and E.~C. Strinati, ``Multi-agent
  reinforcement learning for adaptive user association in dynamic {mmWave}
  networks,'' \emph{IEEE Transactions on Wireless Communications}, vol.~19,
  no.~10, pp. 6520--6534, 2020.

\bibitem{Zhou15TETC}
H.~Zhou, S.~Mao, and P.~Agrawal, ``Approximation algorithms for cell
  association and scheduling in femtocell networks,'' \emph{IEEE Transactions
  on Emerging Topics in Computing}, vol.~3, no.~3, pp. 432--443, 2015.

\bibitem{Alizadeh19TWC}
A.~Alizadeh and M.~Vu, ``Load balancing user association in millimeter wave
  {MIMO} networks,'' \emph{IEEE Transactions on Wireless Communications},
  vol.~18, no.~6, pp. 2932--2945, 2019.

\bibitem{Liu19TCOMM}
R.~Liu, Q.~Chen, G.~Yu, and G.~Y. Li, ``Joint user association and resource
  allocation for multi-band millimeter-wave heterogeneous networks,''
  \emph{IEEE Transactions on Communications}, vol.~67, no.~12, pp. 8502--8516,
  2019.

\bibitem{Khawam20TMC}
K.~Khawam, S.~Lahoud, M.~E. Helou, S.~Martin, and F.~Gang, ``Coordinated
  framework for spectrum allocation and user association in {5G} {HetNets} with
  {mmWave},'' \emph{IEEE Transactions on Mobile Computing}, vol.~21, no.~4, pp.
  1226--1243, 2022.

\bibitem{Zhang17JSAC}
H.~Zhang, S.~Huang, C.~Jiang, K.~Long, V.~C. Leung, and H.~V. Poor, ``Energy
  efficient user association and power allocation in millimeter-wave-based
  ultra dense networks with energy harvesting base stations,'' \emph{IEEE
  Journal on Selected Areas in Communications}, vol.~35, no.~9, pp. 1936--1947,
  2017.

\bibitem{Soleimani18TCOMM}
B.~Soleimani and M.~Sabbaghian, ``Cluster-based resource allocation and user
  association in {mmWave} femtocell networks,'' \emph{IEEE Transactions on
  Communications}, vol.~68, no.~3, pp. 1746--1759, 2018.

\bibitem{Zhang20TWC}
Y.~Zhang, L.~Dai, and E.~W.~M. Wong, ``Optimal {BS} deployment and user
  association for {5G} millimeter wave communication networks,'' \emph{IEEE
  Transactions on Wireless Communications}, vol.~20, no.~5, pp. 2776--2791,
  2021.

\bibitem{Sun20TMC}
L.~Sun, J.~Hou, and T.~Shu, ``Spatial and temporal contextual multi-armed
  bandit handovers in ultra-dense mmwave cellular networks,'' \emph{IEEE
  Transactions on Mobile Computing}, vol.~20, no.~12, pp. 3423--3438, 2021.

\bibitem{Whittle88restless}
P.~Whittle, ``Restless bandits: Activity allocation in a changing world,''
  \emph{Journal of applied probability}, vol.~25, pp. 287--298, 1988.

\bibitem{Papadimitriou94complexity}
C.~H. Papadimitriou and J.~N. Tsitsiklis, ``The complexity of optimal queueing
  network control,'' in \emph{Proceedings of IEEE 9th Annual Conference on
  Structure in Complexity Theory}.\hskip 1em plus 0.5em minus 0.4em\relax IEEE,
  1994, pp. 318--322.

\bibitem{raghunathan2008index}
V.~Raghunathan, V.~Borkar, M.~Cao, and P.~R. Kumar, ``Index policies for
  real-time multicast scheduling for wireless broadcast systems,'' in
  \emph{IEEE INFOCOM 2008-The 27th Conference on Computer
  Communications}.\hskip 1em plus 0.5em minus 0.4em\relax IEEE, 2008, pp.
  1570--1578.

\bibitem{Argon09PEIS}
N.~T. Argon, L.~Ding, K.~D. Glazebrook, and S.~Ziya, ``Dynamic routing of
  customers with general delay costs in a multiserver queuing system,''
  \emph{Probability in the Engineering and Informational Sciences}, vol.~23,
  no.~2, pp. 175--203, 2009.

\bibitem{Liu10TIT}
K.~Liu and Q.~Zhao, ``Indexability of restless bandit problems and optimality
  of {Whittle} index for dynamic multichannel access,'' \emph{IEEE Transactions
  on Information Theory}, vol.~56, no.~11, pp. 5547--5567, 2010.

\bibitem{Avrachenkov16TCNS}
K.~E. Avrachenkov and V.~S. Borkar, ``Whittle index policy for crawling
  ephemeral content,'' \emph{IEEE Transactions on Control of Network Systems},
  vol.~5, no.~1, pp. 446--455, 2016.

\bibitem{Borkar17TCNS}
V.~S. Borkar, G.~S. Kasbekar, S.~Pattathil, and P.~Y. Shetty, ``Opportunistic
  scheduling as restless bandits,'' \emph{IEEE Transactions on Control of
  Network Systems}, vol.~5, no.~4, pp. 1952--1961, 2017.

\bibitem{Borkar2017AOPR}
V.~S. Borkar and S.~Pattathil, ``Whittle indexability in egalitarian processor
  sharing systems,'' \emph{Annals of Operations Research}, pp. 1--21, 2017.

\bibitem{Hsu18ISIT}
Y.-P. Hsu, ``Age of information: Whittle index for scheduling stochastic
  arrivals,'' in \emph{2018 IEEE International Symposium on Information Theory
  (ISIT)}.\hskip 1em plus 0.5em minus 0.4em\relax IEEE, 2018, pp. 2634--2638.

\bibitem{Xu19TVT}
J.~Xu and C.~Guo, ``Scheduling stochastic real-time {D2D} communications,''
  \emph{IEEE Transactions on Vehicular Technology}, vol.~68, no.~6, pp.
  6022--6036, 2019.

\bibitem{Tripathi19ALLERTON}
V.~Tripathi and E.~Modiano, ``A whittle index approach to minimizing functions
  of age of information,'' in \emph{2019 57th Annual Allerton Conference on
  Communication, Control, and Computing (Allerton)}.\hskip 1em plus 0.5em minus
  0.4em\relax IEEE, 2019, pp. 1160--1167.

\bibitem{Avrachenkov19ALLERTON}
K.~Avrachenkov and V.~S. Borkar, ``A learning algorithm for the {Whittle} index
  policy for scheduling web crawlers,'' in \emph{2019 57th Annual Allerton
  Conference on Communication, Control, and Computing (Allerton)}.\hskip 1em
  plus 0.5em minus 0.4em\relax IEEE, 2019, pp. 1001--1006.

\bibitem{Wang19TAC}
J.~Wang, X.~Ren, Y.~Mo, and L.~Shi, ``Whittle index policy for dynamic
  multichannel allocation in remote state estimation,'' \emph{IEEE Transactions
  on Automatic Control}, vol.~65, no.~2, pp. 591--603, 2019.

\bibitem{Sombabu20COMSNETS}
B.~Sombabu, A.~Mate, D.~Manjunath, and S.~Moharir, ``Whittle index for
  {AoI}-aware scheduling,'' in \emph{2020 International Conference on
  COMmunication Systems \& NETworkS (COMSNETS)}.\hskip 1em plus 0.5em minus
  0.4em\relax IEEE, 2020, pp. 630--633.

\bibitem{Wu20TCNS}
S.~Wu, K.~Ding, P.~Cheng, and L.~Shi, ``Optimal scheduling of multiple sensors
  over lossy and bandwidth limited channels,'' \emph{IEEE Transactions on
  Control of Network Systems}, vol.~7, no.~3, pp. 1188--1200, 2020.

\bibitem{Chen21Access}
M.~Chen, K.~Wu, and L.~Song, ``A {Whittle} index approach to minimizing age of
  multi-packet information in {IoT} network,'' \emph{IEEE Access}, vol.~9, pp.
  31\,467--31\,480, 2021.

\bibitem{Gupta21Access}
V.~K. Gupta, S.~K. Singh, and G.~S. Kasbekar, ``Stability analysis of simple
  and online user association policies for millimeter wave networks,''
  \emph{IEEE Access}, vol.~9, pp. 62\,405--62\,429, 2021.

\bibitem{Jo12TWC}
H.-S. Jo, Y.~J. Sang, P.~Xia, and J.~G. Andrews, ``Heterogeneous cellular
  networks with flexible cell association: A comprehensive downlink {SINR}
  analysis,'' \emph{IEEE Transactions on Wireless Communications}, vol.~11,
  no.~10, pp. 3484--3495, 2012.

\bibitem{Maatouk20TWC}
A.~Maatouk, S.~Kriouile, M.~Assad, and A.~Ephremides, ``On the optimality of
  the {Whittle’s} index policy for minimizing the age of information,''
  \emph{IEEE Transactions on Wireless Communications}, vol.~20, no.~2, pp.
  1263--1277, 2020.

\bibitem{Kriouile21ArXiv}
S.~Kriouile, M.~Assaad, and A.~Maatouk, ``On the global optimality of
  {Whittle's} index policy for minimizing the age of information,'' \emph{arXiv
  preprint arXiv:2102.02528}, 2021.

\bibitem{weber1990index}
R.~R. Weber and G.~Weiss, ``On an index policy for restless bandits,''
  \emph{Journal of applied probability}, vol.~27, no.~3, pp. 637--648, 1990.

\bibitem{sundaram1996first}
R.~K. Sundaram, \emph{A first course in optimization theory}.\hskip 1em plus
  0.5em minus 0.4em\relax Cambridge university press, 1996.

\bibitem{nino2007characterization}
J.~Ni{\~n}o~Mora, ``Characterization and computation of restless bandit
  marginal productivity indices,'' in \emph{Proceedings of the 2nd
  International Conference on Performance Evaluation Methodologies and
  Tools}.\hskip 1em plus 0.5em minus 0.4em\relax Institue of Computer Sciences,
  Social-Informatics and Telecommunications Engineering (ICST), 2007.

\bibitem{glazebrook2009index}
K.~D. Glazebrook, C.~Kirkbride, and J.~Ouenniche, ``Index policies for the
  admission control and routing of impatient customers to heterogeneous service
  stations,'' \emph{Operations Research}, vol.~57, no.~4, pp. 975--989, 2009.

\bibitem{nino2012admission}
J.~Ni{\~n}o-Mora, ``Admission and routing of soft real-time jobs to
  multiclusters: Design and comparison of index policies,'' \emph{Computers \&
  operations research}, vol.~39, no.~12, pp. 3431--3444, 2012.

\bibitem{nino2012towards}
------, ``Towards minimum loss job routing to parallel heterogeneous
  multiserver queues via index policies,'' \emph{European journal of
  operational research}, vol. 220, no.~3, pp. 705--715, 2012.

\bibitem{kasbekar2006online}
G.~S. Kasbekar, P.~Nuggehalli, and J.~Kuri, ``Online {client-AP} association in
  {WLANs},'' in \emph{2006 4th International Symposium on Modeling and
  Optimization in Mobile, Ad Hoc and Wireless Networks}.\hskip 1em plus 0.5em
  minus 0.4em\relax IEEE, 2006, pp. 1--8.

\bibitem{kasbekar2006onlinepolicies}
G.~S. Kasbekar, J.~Kuri, and P.~Nuggehalli, ``Online association policies in
  {IEEE} 802.11 wlans,'' in \emph{2006 4th International Symposium on Modeling
  and Optimization in Mobile, Ad Hoc and Wireless Networks}.\hskip 1em plus
  0.5em minus 0.4em\relax IEEE, 2006, pp. 1--10.

\end{thebibliography}
	\bibliographystyle{IEEEtran}
\vspace{1em}
\begin{wrapfigure}{l}{25mm} 
	\includegraphics[width=1in,height=1in,clip,keepaspectratio]{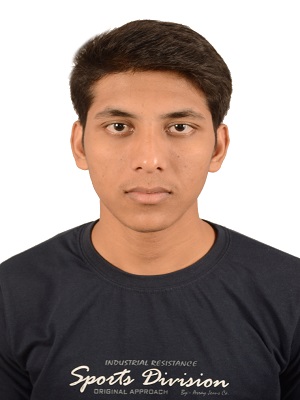}
\end{wrapfigure}\par
\textbf{Santosh Kumar Singh} received the B.Tech. degree in electronics and communication (EC) engineering from the Indian Institute of Technology (IIT) at Roorkee, Roorkee, India, in 2015, the M.Tech. degree in communication engineering from the Indian Institute of Technology (IIT) at Delhi, Delhi, India, in 2017. He is currently pursuing the Ph.D. degree with the Department of Electrical Engineering, Indian Institute of Technology (IIT) at Bombay, Mumbai, India. His research interests include modeling, design, and analysis of resource allocation algorithms in mmWave networks.

\vspace{1em}
\begin{wrapfigure}{l}{25mm} 
	\includegraphics[width=1in,height=1in,clip,keepaspectratio]{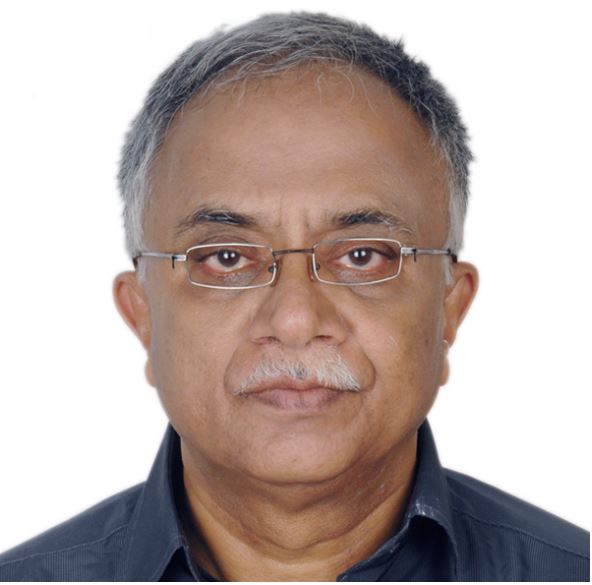}
\end{wrapfigure}\par
\textbf{Vivek S. Borkar} received B.Tech. in Electrical Engineering from IIT Bombay in 1976, M.S. in Systems and Control Engineering from Case Western Reserve University, Cleveland, in 1977, and Ph.D. in Electrical Engineering and Computer Science from the University of California at Berkeley in 1980. He has held regular positions at the TIFR Centre for Applicable Mathematics and the Indian Institute of Science in Bengaluru and the Tata Institute of Fundamental Research and Indian Institute of Technology Bombay in Mumbai. He is currently an Emeritus Fellow in the latter. He is a Fellow of the IEEE, the American Mathematical Society, The World Academy of Sciences, and various science and engineering academies in India. He has won many national honors including the S.S. Bhatnagar Award, the Prasanta Chandra Mahalanobis Medal of the Indian National Science Academy, and the Homi Bhabha and S. S. Bhatnagar Fellowships. His research interests include control of Markov processes, stochastic approximation algorithms and reinforcement learning.

\vspace{1em}
\begin{wrapfigure}{l}{25mm} 
	\includegraphics[width=1in,height=1in,clip,keepaspectratio]{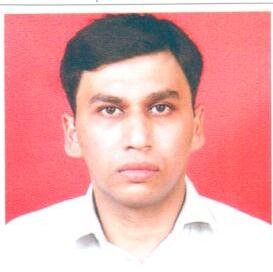}
\end{wrapfigure}\par
\textbf{Gaurav S. Kasbekar} received B.Tech. in Electrical
Engg. from Indian Institute of Technology (IIT), Bombay
in 2004, M.Tech. in Electronics Design and
Technology (EDT) from Indian Institute of Science
(IISc), Bangalore in 2006 and Ph.D from University of Pennsylvania, USA in 2011. He is currently an Associate Professor with the
Department of Electrical Engineering, IIT Bombay. His research interests
are in communication networking and network security. He received the CEDT Design Medal for being adjudged the best Masters student in EDT at IISc.\par
\end{document}